\theoremstyle{definition}
\newtheorem{algorithm}[theorem]{Algorithm}
\definecolor{myblue}{RGB}{0,102,202}
\definecolor{myred}{RGB}{213,0,73}
\newcommand{\curveAnnot}[3][black]{%
	\begin{tikzpicture}[remember picture,x=.2cm,y=.04cm,baseline={([yshift={-\ht\strutbox}]current bounding box.north)},outer sep=0pt,inner sep=0pt]
		\edef\yp{0}
		\edef\y{0}
		\edef\xp{0}
		\foreach \x [count=\xi] in {#3} {
			\ifthenelse{\NOT 1 = \xi}{
				\draw[#1,cap=round] (\xp,\yp) -- (\xp,\y);
				\draw[#1,cap=round] (\xp,\y) -- (\x,\y);
				\xdef\yp{\y}
				\pgfmathparse{\y+1}
				\xdef\y{\pgfmathresult}
			}{}
			\xdef\xpp{\xp}
			\xdef\xp{\x}
		}
		\pgfmathparse{\xp+.001*(\xpp-\xp)}
		\draw[#1,->,cap=round] (\pgfmathresult,\yp) -- (\xp,\yp);
		#2
		\begin{pgfonlayer}{bg}
			\foreach \x in {0,...,11} {\draw[gray,dotted] (\x,-1) -- (\x,\y);}
		\end{pgfonlayer}
	\end{tikzpicture}}
\newcommand{\curve}[2][black]{{\curveAnnot[#1]{\empty}{#2}}}
\newcommand{\myparNS}[1]{\noindent{\sffamily\bfseries #1.}}
\newcommand{\mypar}[1]{\medskip\myparNS{#1}}
\newcommand{\SETH}{\textsc{SETH}}
\newcommand{\YES}{\textsc{Yes}}
\newcommand{\NO}{\textsc{No}}
\newcommand{\uOV}{$\textsc{Orthog}^*$}
\newcommand{\OV}{\textsc{Orthog}}
\newcommand{\alg}{\textbf{Alg}}
\newcommand{\bigO}{\mathcal{O}}
\newcommand{\R}{\mathbb{R}}
\newcommand{\N}{\mathbb{N}}
\newcommand{\C}[1]{\langle #1\rangle}
\newcommand{\from}{\colon}
\newcommand{\e}{\varepsilon}
\newcommand{\width}{w}
\newcommand{\dF}{d_F}
\newcommand{\ddF}{d_{dF}}
\newcommand{\dwF}{d_{wF}}
\newcommand{\dwwF}{d_{wwF}}
\newcommand{\doF}{d_{\vec{F}}}
\newcommand{\rev}[1]{\ensuremath{\mathit{reverse}(#1)}}
\renewcommand{\u}{{\vec{u}}}
\renewcommand{\v}{{\vec{v}}}
\newcommand{\enumi}[1]{\textcolor{darkgray}{\sffamily\bfseries\upshape\mathversion{bold}#1.}}
\title{SETH Says: Weak Fr\'echet Distance is Faster, but only if it is Continuous and in One Dimension}
\titlerunning{Weak Fr\'echet Distance is Faster if it is Continuous and in One Dimension}
\author{Kevin Buchin, Tim Ophelders, Bettina Speckmann}
	{Dep. of Mathematics and Computer Science, TU Eindhoven, The Netherlands\\{}
	[\href{mailto:k.a.buchin@tue.nl}{k.a.buchin}|\href{mailto:t.a.e.ophelders@tue.nl}{t.a.e.ophelders}|\href{mailto:b.speckmann@tue.nl}{b.speckmann}]@tue.nl}
	{\empty}
	{\empty}
	{The authors are supported by the Netherlands Organisation for Scientific Research (NWO) under project no.~612.001.207 (Kevin Buchin) and no.~639.023.208 (Tim Ophelders and Bettina Speckmann).}
\authorrunning{K.\ Buchin and T.\ Ophelders and B.\ Speckmann}
\subjclass{I.3.5 Computational Geometry and Object Modeling}
\keywords{SETH, Orthogonal Vectors, Fr\'echet distance, lower bounds, inapproxi\-mability}
\begin{document}

\maketitle

\begin{abstract}
	We show by reduction from the Orthogonal Vectors problem that algorithms with strongly subquadratic running time cannot approximate the Fr\'echet distance between curves better than a factor~$3$ unless SETH fails.
	We show that similar reductions cannot achieve a lower bound with a factor better than~$3$.
	Our lower bound holds for the continuous, the discrete, and the weak discrete Fr\'echet distance even for curves in one dimension.
	Interestingly, the continuous weak Fr\'echet distance behaves differently.
	Our lower bound still holds for curves in two dimensions and higher.
	However, for curves in one dimension, we provide an exact algorithm to compute the weak Fr\'echet distance in linear time.
\end{abstract}

\section{Introduction}

	The \emph{Fr\'echet distance} is a popular metric for measuring the similarity between curves. Intuitively, it measures how well two parameterized curves can be aligned by a monotone reparameterization. The Fr\'echet distance finds many applications, in particular in the analysis and visualization of movement data~\cite{brakatsoulas2005map,BuchinBDFJSSSW17,GudmundssonWolle10,konzack2017visual}. 
	Alt and Godau~\cite{altgodau} were the first to study the Fr\'echet distance from a computational perspective. They presented an algorithm that computes the Fr\'echet distance between two polygonal curves of complexity $n$ in $\bigO(n^2 \log n)$ time. 
	Alt and Godau's work triggered a wealth of research on the Fr\'echet distance. Specific topics of interest include algorithms to compute the Fr\'echet distance for special classes of curves~\cite{AronovHKWW06,DriemelHW12}, generalizations to surfaces ~\cite{AltBuchin10,BuchinBuSc10,NayyeriX16}, and algorithms for meaningful variants~\cite{BuchinBW09,CookWenk10,DriemelH13}. 
	
	Despite all these results, the bound of $\bigO(n^2 \log n)$ by Alt and Godau for the original problem of computing the Fr\'echet distance between two general polygonal curves stood for nearly twenty years. Only quite recently there has finally been progress on this question. First, Buchin et al.~\cite{buchin2017soviets} presented an algorithm with a slightly improved (but still super-quadratic) running time. Then, Bringmann~\cite{Bringmann14} proved that no significantly faster algorithm for computing the Fr\'echet distance between two general polygonal curves exists unless the \emph{Strong Exponential Time Hypothesis} (\SETH{}) fails. Bringmann's proof nearly settles the question, except for one important special case: curves in one dimension. His construction uses curves embedded in two-dimensional space and hence the question remained open whether a similar conditional lower bound holds also in one dimension.
	
	One dimensional curves (parameterized over time) naturally occur in time series analysis. In this context the Fr\'echet distance can, for instance, be used to cluster data~\cite{DriemelKS16}. The Fr\'echet distance in one dimension can also be used as a subroutine for approximating the Fr\'echet distance for curves in two and higher dimensions~\cite{Bringmann17}. Bringmann's lower bound sparked renewed interest in the computation of the Fr\'echet distance between one-dimensional curves. Cabello and Korman showed that for two $1$D curves that do not overlap, the Fr\'echet distance can be computed in linear time (personal communication, referenced in~\cite{Bringmann17}). Furthermore, Buchin et al.~\cite{BuchinCLMMOS17} proved that if one of the curves visits any location at most a constant number of times, then the Fr\'echet distance can be computed in near linear time. Both results apply only to restricted classes of curves and hence the general case in $1$D remained open.
  
	\mypar{Our results} In this paper we settle the general question for one dimension: we give a conditional lower bound for the Fr\'echet distance between two general polygonal curves in $1$D. To do so we reduce (in linear time) from the \emph{Orthogonal Vector Problem}: given two sets of vectors, is there a pair of orthogonal vectors, one from each set? For vectors of dimension $d = \omega (\log n)$ no algorithm running in strongly subquadratic time is known. Furthermore, an algorithm with such a running time does not exist in various computational models~\cite{kane2017orthogonal} and would have far-reaching consequences~\cite{AbboudBDN18}. In particular, the existence of a strongly subquadratic algorithm for the Orthogonal Vector Problem would imply that the Strong Exponential Time Hypothesis fails. Our reduction hence implies that no strongly subquadratic algorithm for approximating the Fr\'echet distance within a factor less than~$3$ exists unless \SETH{} fails. 

	Our result also improves upon the previously best known conditional lower bound for curves in $2$D by Bringmann and Mulzer~\cite{BringmannM16} (approximation within a factor less than~$1.399$). Furthermore, we argue that similar reductions, based on a ``traditional'' encoding of the Orthogonal Vectors Problem, cannot achieve a lower bound better than 3.

	Section~\ref{sec:prelim} gives various definitions and background. In particular, we recall an asymmetric variant of the Fr\'echet distance introduced by Alt and Godau~\cite{altgodau}, the so-called \emph{partial Fr\'echet distance}. In Section~\ref{sec:results} we succinctly state all our results and in Section~\ref{sec:traditional} we briefly argue why traditional reductions cannot achieve a lower bound better than 3. In Section~\ref{sec:partial} we present our reduction to the partial Fr\'echet distance, followed in Section~\ref{sec:frechet} by the reduction to the  Fr\'echet distance. The remainder of the paper covers the two most popular variants of the Fr\'echet distance, namely the discrete Fr\'echet distance and the weak Fr\'echet distance.

	The \emph{discrete Fr\'echet distance}~\cite{agarwal2014computing,eiter1994computing} considers only distances between vertices of the curves. Bringmann and Mulzer~\cite{BringmannM16} proved that there is no strongly subquadratic time algorithm for approximating the discrete Fr\'echet distance in any dimension within a factor less than~$1.399$ unless \SETH{} fails. In Section~\ref{sec:discrete} we extend our reduction for the (regular) Fr\'echet distance to the discrete Fr\'echet distance, and hence also strengthen this lower bound to an approximation factor of~$3$.

	For the \emph{weak Fr\'echet distance}~\cite{altgodau} the reparameterizations are not required to be monotone. The missing monotonicity condition gives this variant a very different flavor than the regular and the discrete Fr\'echet distance. For the weak Fr\'echet distance only few complexity results are known: it can be computed in quadratic time~\cite{har2014frechet}, and there is an $\Omega(n \log n)$ lower bound in the algebraic computation tree model for curves in $2$D~\cite{buchin2007difficult}. The latter paper also presents a linear-time algorithm for a variant for curves in $1$D, which allows for a broader class of reparameterizations (see Section~\ref{sec:weak} for details). In Section~\ref{sec:weak} we significantly improve the lower bound by showing that there is no strongly subquadratic time algorithm for approximating the weak Fr\'echet distance within a factor less than~$3$ unless \SETH{} fails. Again we reduce from Orthogonal Vectors, but the missing monotonicity forces us to use a different reduction, which applies only to curves in two and higher dimensions. However, the same reduction can also be used for the discrete weak Fr\'echet distance for curves in~$1$D.

	This leaves the general weak Fr\'echet distance in $1$D as the only remaining case without a conditional lower bound. Interestingly the weak Fr\'echet distance in $1$D is actually computable in subquadratic time. More specifically, in Section~\ref{sec:algoWeak} we present a linear time algorithm for computing the general continuous weak Fr\'echet distance in $1$D. Our algorithm first simplifies the curves independently, removing vertices that do not influence the distance. Then a greedy strategy allows us to compute the weak Fr\'echet distance in linear time.

\section{Preliminaries}\label{sec:prelim}
	For a sequence of vertices~$p_1,\dots,p_n\in\R$, let~$\C{p_1,\dots,p_n}$ denote the continuous function~$P\from[1,n]\to\R$ defined by~$P(i+\lambda)=p_i+\lambda(p_{i+1}-p_i)$ with~$i\in\N$ and~$\lambda\in[0,1]$.
	We say that~$P$ is a one-dimensional curve on~$|P|=n$ vertices.
	Two curves~$P=\C{p_1,\dots,p_{|P|}}$ and~$Q=\C{q_1,\dots,q_{|Q|}}$ can be composed into the curve~$P\circ Q=\C{p_1,\dots,p_{|P|},q_1,\dots,q_{|Q|}}$ on~$|P|+|Q|$ vertices.
	For a natural number~$k$, let~$k\cdot P$ be the composition of~$k$ copies of~$P$.
	For~$1\leq a\leq b\leq |P|$, let~$P[a,b]$ be the curve defined by the sequence of vertices starting at~$P(a)$, followed by the sequence of~$p_i$ with~$a<i<b$, and ending at~$P(b)$.

	The \emph{Fr\'echet distance} between two curves~$P$ and~$Q$ is based on matchings between those curves.
	A \emph{matching} is a pair of functions~$\phi_1$ and~$\phi_2$ that map a time parameter~$t\in[0,1]$ to a position along~$P$ and~$Q$ respectively.
	For a continuous matching, we require that~$\phi_1\from[0,1]\to[1,|P|]$ and~$\phi_2\from[0,1]\to[1,|Q|]$ are continuous non-decreasing surjections.
	For a discrete matching, we require that~$\phi_1\from[0,1]\to\{1,|P|\}$ and~$\phi_2\from[0,1]\to\{1,|Q|\}$ are non-decreasing surjections (with a discrete range).
	For curves~$P$ and~$Q$, the \emph{width} of a matching is the maximum distance between~$P(\phi_1(t))$ and~$Q(\phi_2(t))$, defined as
	$$\width_{P,Q}(\phi_1,\phi_2)=\max_{t\in[0,1]}\|P(\phi_1(t))-Q(\phi_2(t))\|\text{.}$$
	The (continuous) \emph{Fr\'echet distance} between two curves~$P$ and~$Q$ is defined as
	$$\dF(P,Q)=\inf_{\phi_1,\phi_2}\width_{P,Q}(\phi_1,\phi_2)$$
	where~$(\phi_1,\phi_2)$ ranges over continuous matchings.
	The \emph{discrete Fr\'echet distance}~$\ddF$ is defined similarly, except that~$(\phi_1,\phi_2)$ ranges over continuous matchings.
	We also consider the following (asymmetric) variant of the Fr\'echet distance, as introduced in~\cite{altgodau}.
	A \emph{partial matching} from~$P$ to~$Q$ is a matching between~$P$ and a subcurve~$Q[a,b]$ of~$Q$.
	In the discrete case, we impose that~$a$ and~$b$ are integers.
	The \emph{partial Fr\'echet distance}~$\doF$ from~$P$ to~$Q$ is~$\doF(P,Q)=\inf_{0\leq a\leq b\leq|Q|}\dF(P,Q[a,b])$.
	The \emph{weak Fr\'echet distance} is defined in Section~\ref{sec:weak}.

	The \emph{free space diagram} is a frequently used tool for computing the Fr\'echet distance.
	For two curves~$P$ and~$Q$, the $\e$-free space is the set of pairs~$(x,y)\in [1,|P|]\times[1,|Q|]$ for which~$\|P(x)-Q(y)\|\leq\e$.
	A matching~$(\phi_1,\phi_2)$ of width~$\e$ traces a bimonotone path~$t\mapsto(\phi_1(t),\phi_2(t))$ from~$(0,0)$ to~$(|P|,|Q|)$ through the~$\e$-free space.
	Indeed, any such bimonotone path yields an~$\e$-matching.
	We tend to draw free space diagram using arc-length parameterizations of the curves on the~$x$- and~$y$-axes.

	In contrast to a matching, a \emph{cut} of width~$\e$ and complexity~$k$ is a pair~$(\Gamma_1,\Gamma_2)$ of sequences of~$k$ paths~$\Gamma_1=\{\gamma_{1,1},\dots,\gamma_{1,k}\}$ and~$\Gamma_2=\{\gamma_{2,1},\dots,\gamma_{2,k}\}$ with the following properties.
	\begin{itemize}
		\item For any~$i$, we have~$\gamma_{1,i}\from[0,1]\to[1,|P|]$ and~$\gamma_{2,i}\from[0,1]\to[1,|Q|]$.
		\item For any~$i$ and~$t$ and~$\delta>0$, the pair~$(\gamma_{1,i}(t),\gamma_{2,i}(t))$ does not lie in the~$(\e-\delta)$-free space.
		\item For any~$i<k$, we have~$\gamma_{1,i}(1)\leq\gamma_{1,i+1}(0)$ and~$\gamma_{2,i}(1)\geq\gamma_{2,i+1}(0)$.
	\end{itemize}
	We say that a cut of complexity~$k$ starts at~$(\gamma_{1,1}(0),\gamma_{2,1}(0))$ and ends at~$(\gamma_{1,k}(1),\gamma_{2,k}(1))$.
	If a cut of width~$\e$ starts on~$[1,|P|]\times\{1\}$ or~$\{|P|\}\times[1,|Q|]$ and ends on~$[1,|P|]\times\{|Q|\}$ or~$\{1\}\times[1,|Q|]$, then the (continuous) Fr\'echet distance between~$P$ and~$Q$ is at least~$\e$ \cite{buchin2015computing}.
	Similarly, if a cut of width~$\e$ starts on~$[1,|P|]\times\{1\}$ and ends on~$[1,|P|]\times\{|Q|\}$, then the partial Fr\'echet distance from~$P$ to~$Q$ is at least~$\e$.

\subsection{Orthogonal Vectors}
	Let~$U=\{\u_0,\dots,\u_{n-1}\}$ and~$V=\{\v_0,\dots,\v_{m-1}\}\subseteq\{0,1\}^d$ be sets of boolean vectors of dimension~$d$.
	The \emph{Orthogonal Vectors} problem (\OV{}) asks for~$n=m$, whether vectors~$\u\in U$ and~$\v\in V$ exist for which~$\u$ and~$\v$ are orthogonal; that is,~$\sum_{i=0}^{d-1} u_i v_i=0$.
	For any~$\delta>0$, \OV{} has no~$\bigO(n^{2-\delta}d^{\bigO(1)})$ time algorithm unless \SETH{} (and the weaker hypothesis \SETH{}'~\cite{Williams04}) fails.
	Denote by~\uOV{} the variant of \OV{} where we allow~$n\neq m$.
	For any~$\delta>0$, \uOV{} has no~$\bigO((nm)^{1-\delta}d^{\bigO(1)})$ time algorithm unless \SETH{}' fails~\cite{Bringmann14}.
	The reductions in this paper use the following restriction on~$U$ and~$V$.
	\begin{definition}[Nontrivial instance]
		Nonempty sets~$U$ and~$V\subseteq\{0,1\}^d$ for which~$d\notin\bigO(1)$ and neither~$U$ nor~$V$ contains the zero vector.
	\end{definition}

	\begin{lemma}\label{lem:nontrivial}
		If there is an algorithm~\alg{} that solves nontrivial instances in time~$T(n,m,d)$, then~\uOV{} can be solved in time~$\bigO((n+m)d+T(n,m,d))$.
	\end{lemma}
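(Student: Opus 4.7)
The plan is to preprocess the~\uOV{} instance~$(U,V)$ in~$\bigO((n+m)d)$ time, dispatching all degenerate cases directly and handing only genuinely nontrivial instances to~\alg{}.

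First I would scan both sets in~$\bigO((n+m)d)$ time. If~$U=\emptyset$ or~$V=\emptyset$, no orthogonal pair exists and we output~\NO{}. During the same scan, detect zero vectors: if~$\mathbf{0}\in U$ while~$V$ is nonempty, then~$\mathbf{0}$ is orthogonal to every~$\v\in V$ and we output~\YES{}; the case~$\mathbf{0}\in V$ with~$U$ nonempty is symmetric. After this step, every surviving instance has both~$U$ and~$V$ nonempty and free of the zero vector.

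Next I would address the requirement~$d\notin\bigO(1)$ by fixing an arbitrary constant threshold~$d_0$ and branching on the size of~$d$. If~$d\leq d_0$, there are only~$2^{d_0}=\bigO(1)$ possible vectors, so we can bucket the elements of~$U$ and~$V$ by value in~$\bigO((n+m)d)$ time and then check the~$\bigO(1)$ pairs of distinct vectors for orthogonality directly, answering~\uOV{} in~$\bigO((n+m)d)$ time overall. If instead~$d>d_0$, the instance is nontrivial in the sense of the definition and we invoke~\alg{}, which by assumption answers correctly in time~$T(n,m,d)$.

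Summing the two contributions gives the claimed bound~$\bigO((n+m)d+T(n,m,d))$, and correctness is immediate since every branch either resolves the instance by an elementary argument or defers to~\alg{} on a nontrivial input. There is no real obstacle here; the only thing to be careful about is to ensure that the small-$d$ and zero-vector branches are handled before any call to~\alg{}, so that~\alg{} is never invoked on an input outside its guarantee.
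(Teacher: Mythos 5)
Your proof is correct and follows essentially the same case analysis as the paper's: dispatch empty sets, zero vectors, and bounded dimension directly in~$\bigO((n+m)d)$ time, then hand the remainder to~\alg. The only minor difference is that you spell out the small-$d$ branch via bucketing by vector value, whereas the paper just asserts it is solvable in constant time (after the scan); your version is slightly more careful but not a different argument.
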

	\begin{proof}
		We can test in~$\bigO(nmd)$ time whether an instance~$(U,V)$ of~\uOV{} is nontrivial.
		If so, we return~$\alg{}(U,V)$ in time~$\bigO((n+m)d+T(n,d))$.
		Otherwise we solve it in~$\bigO((n+m)d)$ time using the following three cases.
		If~$U$ or~$V$ is empty, then there is no orthogonal pair of vectors.
		If~$d$ is at most a constant, then~$U$ and~$V$ contain at most~$2^d$ vectors, so the instance can be solved in constant time.
		If neither~$U$ and~$V$ are empty, but~$U$ or~$V$ contains the zero vector, then the zero vector is orthogonal to any vector from the other set.
	\end{proof}
	We obtain Corollaries~\ref{cor:uOV} and (using an analogous argument)~\ref{cor:OV} from Lemma~\ref{lem:nontrivial}.
	Hence, we assume~$(U,V)$ to be a nontrivial instance for the remainder of this paper.
	\begin{corollary}\label{cor:uOV}
		\SETH{}' fails if for some~$\delta>0$, there is a~$\bigO((nm)^{1-\delta}d^{\bigO(1)})$ time algorithm for nontrivial instances of~\uOV{}.
	\end{corollary}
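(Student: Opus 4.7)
The plan is to invoke Lemma~\ref{lem:nontrivial} directly. Suppose that for some fixed $\delta > 0$ there is an algorithm \alg{} that decides every nontrivial instance of \uOV{} in time $T(n,m,d) = \bigO((nm)^{1-\delta}d^{\bigO(1)})$. First I would apply Lemma~\ref{lem:nontrivial} to lift \alg{} to an algorithm for arbitrary instances of \uOV{} running in time $\bigO((n+m)d + T(n,m,d))$. It then remains to recognize that this combined running time fits the form $\bigO((nm)^{1-\delta'}d^{\bigO(1)})$ for some $\delta' > 0$, at which point the \SETH{}'-based lower bound on \uOV{} recalled immediately before the lemma produces the desired contradiction and forces \SETH{}' to fail.

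The only real piece of work is absorbing the additive $\bigO((n+m)d)$ preprocessing overhead into the same asymptotic form. I would pick any $\delta' \in (0,\delta)$ and dispatch this by a short case split: in the ``interesting'' regime where $n$ and $m$ are polynomially related in one another (which is where the hypothesized hardness of \uOV{} genuinely bites), the linear-in-input term $(n+m)d$ is already dominated by $(nm)^{1-\delta'}d^{\bigO(1)}$; in the opposite regime one side is small enough that a brute-force sweep over the smaller set already runs within the same bound. Either way the combined algorithm is strongly subquadratic in the sense required to negate \SETH{}'.

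The main (and essentially only) obstacle is phrasing this case analysis cleanly; the substantive content is Lemma~\ref{lem:nontrivial} itself, which has already done the real reduction from arbitrary to nontrivial instances. A completely analogous argument, specialized to $n = m$, will yield Corollary~\ref{cor:OV} for the symmetric problem \OV{}.
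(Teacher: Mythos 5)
Your plan is essentially the paper's: the paper itself does not spell out a proof and simply states that the corollary ``follows from Lemma~\ref{lem:nontrivial}.'' Invoking Lemma~\ref{lem:nontrivial} to transfer a $T(n,m,d)=\bigO((nm)^{1-\delta}d^{\bigO(1)})$ algorithm for nontrivial instances into an $\bigO((n+m)d + T(n,m,d))$ algorithm for all of \uOV{}, and then appealing to the hardness of \uOV{} stated just before the lemma, is exactly the intended route.

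One detail in your absorption argument is off, though. In the ``opposite regime'' (say $m$ tiny relative to $n$) you claim a brute-force sweep over the smaller set ``already runs within the same bound.'' That isn't so: brute force costs $\bigO(nmd)$, and when $m=\bigO(1)$ this is $\bigO(nd)$, which is \emph{not} $\bigO((nm)^{1-\delta'}d^{\bigO(1)}) = \bigO(n^{1-\delta'}d^{\bigO(1)})$ for any $\delta'>0$ once $d$ is merely polylogarithmic in $n$ --- the extra $n^{\delta'}$ factor cannot be charged to $d^{\bigO(1)}$. The reason this does not sink the argument is that the \uOV{} hardness from~\cite{Bringmann14} is implicitly (and in the paper's Theorem statements, explicitly) about polynomial restrictions $m=\Theta(n^\gamma)$ with $\gamma\in(0,1]$; the unbalanced regime is one where the claimed lower bound is vacuous because it is below the input size $(n+m)d$, so there is nothing to contradict there. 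In the polynomially-restricted regime, $(n+m)d=\bigO(nd)=\bigO((nm)^{1-\delta'}d^{\bigO(1)})$ whenever $\delta'\leq\gamma/(1+\gamma)$, so taking $\delta'=\min(\delta,\gamma/(1+\gamma))/2>0$ absorbs both the preprocessing and $T(n,m,d)$ at once. In short: drop the brute-force case entirely rather than trying to argue it meets the bound.
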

	\begin{corollary}\label{cor:OV}
		\SETH{}' fails if for some~$\delta>0$, there is a~$\bigO(n^{2-\delta}d^{\bigO(1)})$ time algorithm for nontrivial instances of~\OV{}.
	\end{corollary}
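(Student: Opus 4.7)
The plan is to mimic the proof of Corollary~\ref{cor:uOV} in the symmetric setting $n=m$, using Lemma~\ref{lem:nontrivial} as a black box. Specifically, I would argue by contradiction: suppose that for some fixed~$\delta>0$ there is an algorithm~\alg{} that solves nontrivial instances of~\OV{} in time~$T(n,d)=\bigO(n^{2-\delta}d^{\bigO(1)})$. Since an \OV{} instance is simply a \uOV{} instance with~$m=n$, I can invoke Lemma~\ref{lem:nontrivial} (with~$m=n$) to obtain an algorithm for general~\OV{} running in time~$\bigO(nd+T(n,d))$.

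Next I would absorb the preprocessing term into the main term: since~$\delta\leq 1$, we have~$nd\leq n^{2-\delta}d$ for all~$n\geq 1$, so
$$\bigO(nd+n^{2-\delta}d^{\bigO(1)})=\bigO(n^{2-\delta}d^{\bigO(1)}).$$
Thus the assumed algorithm for nontrivial~\OV{} yields an algorithm for the general \OV{} problem within the same strongly subquadratic bound. By the standard statement recalled just before Lemma~\ref{lem:nontrivial}, no such algorithm for general~\OV{} exists unless \SETH{}' fails, which gives the desired conclusion.

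There is no real obstacle here: the only thing to check is that the three fallback cases in the proof of Lemma~\ref{lem:nontrivial} (empty~$U$ or~$V$, constant~$d$, and the zero vector) all carry over verbatim when one restricts attention to~$n=m$, and that the overhead~$\bigO(nd)$ dominates none of the complexity. Both are immediate, so the corollary follows from Lemma~\ref{lem:nontrivial} with essentially no extra work; this is exactly the sense in which the argument is \emph{analogous} to that of Corollary~\ref{cor:uOV}.
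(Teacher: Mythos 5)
Your proof is correct and is essentially the paper's own route: the paper states that Corollary~\ref{cor:OV} follows from Lemma~\ref{lem:nontrivial} \emph{by an analogous argument}, which is exactly the $n=m$ restriction you spell out, with the $\bigO(nd)$ preprocessing absorbed into $\bigO(n^{2-\delta}d^{\bigO(1)})$. The one subtlety, which you already flag at the end, is that Lemma~\ref{lem:nontrivial} cannot quite be invoked as a black box (its hypothesis asks for an algorithm on all nontrivial \uOV{} instances, including $n\neq m$, whereas Corollary~\ref{cor:OV} only supplies one for $n=m$); one must instead re-run the lemma's proof restricted to $n=m$, which goes through verbatim because the reduction never alters $n$ or $m$.
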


\section{Results}\label{sec:results}
	For any polynomial restriction of~$1\leq |P|\leq |Q|$ and any~$\delta>0$, we show for several variants of the Fr\'echet distance that there is no factor~$(3-\e)$-approximation algorithm with the running times listed in Table~\ref{tab:results} unless \SETH{}' fails.
	The continuous weak Fr\'echet distance between curves in one dimension is a special case, and we give a linear-time exact algorithm.

	\begin{table}[h]%
	\caption{Asymptotic running times with no~$(3-\e)$-approximation, assuming \SETH{}' and~$|P|\leq|Q|$.
	Results listed for continuous and discrete curves in one dimension and higher dimensions.\label{tab:results}}%
	\begin{tabular}{ r r | c c c c }
		                  &             & continuous $1$D        & discrete $1$D          & continuous $k$D        & discrete $k$D         \\
		\hline
		        Fr\'echet &  $\dF(P,Q)$ & $(|P|+|Q|)^{2-\delta}$ & $(|P|+|Q|)^{2-\delta}$ & $(|P|+|Q|)^{2-\delta}$ & $(|P|+|Q|)^{2-\delta}$\\
		partial Fr\'echet & $\doF(P,Q)$ & $(|P||Q|)^{1-\delta}$  & $(|P||Q|)^{1-\delta}$  & $(|P||Q|)^{1-\delta}$  & $(|P||Q|)^{1-\delta}$ \\
		   weak Fr\'echet & $\dwF(P,Q)$ & ---                    & $(|P||Q|)^{1-\delta}$  & $(|P||Q|)^{1-\delta}$  & $(|P||Q|)^{1-\delta}$ 
	\end{tabular}%
	\end{table}

\section{Traditional Reductions}\label{sec:traditional}
	Over the past few years, several conditional lower bounds for computing the Fr\'echet distance have been found~\cite{Bringmann14,BringmannM16}. In each case, the reduction is (or can be phrased as one) from Orthogonal Vectors.
	The common pattern in these reductions is that each vector $\u\in U$ is encoded as a curve~$P_\u$, and each vector~$\v\in V$ is encoded as a curve~$Q_\v$, with the crucial property that the distance between~$P_\u$ and~$Q_\v$ is at most~$\e$ if~$\u$ and~$\v$ are orthogonal, and at least~$c\e$ otherwise (for some~$c>1$).
	We refer to a reduction that encodes vectors in this way as a \emph{traditional} reduction.
	In this paper, we give traditional reductions with~$c=3$, and in Lemma~\ref{lem:traditional} we show that traditional reductions with~$c>3$ do not exist.

	\begin{lemma}
		There is no traditional reduction with~$c>3$.\label{lem:traditional}
	\end{lemma}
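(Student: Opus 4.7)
The plan is to derive the bound $c \le 3$ purely from the triangle inequality for the Fr\'echet distance, by exhibiting four specific vectors whose pairwise orthogonality pattern forces the non-orthogonal pair to lie within $3\e$ of anything an orthogonal-only assumption could achieve. Concretely, I look for two pairs $(\u_1,\u_2)\in U\times U$ and $(\v_1,\v_2)\in V\times V$ such that three of the four cross pairs are orthogonal while the fourth is not, so that the non-orthogonal pair is connected to itself via a path of three orthogonal pairs.

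Such a configuration is easy to produce: take $\u_1=(1,0,0)$, $\u_2=(0,1,0)$, $\v_1=(0,1,1)$, $\v_2=(0,0,1)$, padded with zeros to any required dimension~$d$, and embed them in an \OV{} instance enlarged with arbitrary additional vectors so that the instance is nontrivial. A direct inner-product check gives $\u_1\cdot\v_1=\u_1\cdot\v_2=\u_2\cdot\v_2=0$ and $\u_2\cdot\v_1=1$. A traditional reduction with constant~$c$ must therefore produce curves with $\dF(P_{\u_1},Q_{\v_1})\le\e$, $\dF(P_{\u_1},Q_{\v_2})\le\e$, $\dF(P_{\u_2},Q_{\v_2})\le\e$, and simultaneously $\dF(P_{\u_2},Q_{\v_1})\ge c\e$.

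Since the Fr\'echet distance satisfies the triangle inequality, chaining the three orthogonal bounds yields
\[
\dF(P_{\u_2},Q_{\v_1})\le \dF(P_{\u_2},Q_{\v_2})+\dF(Q_{\v_2},P_{\u_1})+\dF(P_{\u_1},Q_{\v_1})\le 3\e\text{,}
\]
so $c\e\le 3\e$ and thus $c\le 3$, contradicting $c>3$. I expect the only subtle point to be ensuring that the four-vector gadget can always be realized inside a \emph{nontrivial} \OV{} instance (so the reduction must apply to it); this is handled by the zero-padding to the desired dimension and the inclusion of auxiliary nonzero vectors. The argument itself is purely metric, so no geometric case analysis of the encoding is required.
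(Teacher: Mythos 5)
Your argument is the same as the paper's: identify four vectors with exactly one non-orthogonal cross pair, then chain the triangle inequality through the three orthogonal pairs to contradict $c>3$. The only difference is that you exhibit explicit witness vectors and note the need to embed them in a nontrivial instance, which the paper leaves implicit.
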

	\begin{proof}
		Consider vectors~$\u_1,\v_1,\u_2$ and~$\v_2$ such that each pair of vectors except~$\u_1$ and~$\v_2$ is orthogonal. By the triangle inequality we have~$\dF(P_{\u_1},Q_{\v_2})\leq \dF(P_{\u_1},Q_{\v_1})+\dF(Q_{\v_1},P_{\u_2})+\dF(P_{\u_2},Q_{\v_2})\leq 3\e<c\e$, contradicting that~$\dF(P_{\u_1},Q_{\v_2})\geq c\e$.
	\end{proof}

\section{Partial Fr\'echet distance}\label{sec:partial}
	In this section we give a~$\bigO((n+m)d)$ time transformation from a nontrivial instance~$(U,V)$ to a pair of one-dimensional curves~$P$ and~$Q$ of sizes~$\Theta(nd)$ and~$\Theta((n+m)d)$ respectively.
	In particular, if~$n$ is small compared to~$m$, then~$P$ and~$Q$ will have an unbalanced number of vertices.
	We show that~$\doF(P,Q)\leq 1$ if~$(U,V)$ is a \YES-instance, and~$\doF(P,Q)\geq 3$ otherwise.
	Hence, for any polynomial restriction of~$1\leq |P|\leq |Q|$ and any~$\delta,\e>0$, a~$\bigO((|P||Q|)^{1-\delta})$ time~$(3-\e)$-approximation algorithm of the partial Fr\'echet distance violates \SETH'.
	Define~$P$ and~$Q$ as below.
	For a convenient analysis, we exhaustively remove vertices that lie on the segment between their neighbors so that edges have positive length and alternate in direction.
	\begin{align*}
		P_{\u\in U}       &= \C{0}\circ\bigcirc_{i=1}^d (\C{10-2u_i,4})\circ\C{0}
		                  &  \hspace{-\linewidth}P_{(0,1,0,1)}~
		                     \curve[myblue]{0,10,4,8,4,10,4,8,0}\\
		Q_{\v\in V}       &= \C{1}\circ\bigcirc_{i=1}^d (\C{9+2v_i,3})\circ\C{1}
		                  &  \hspace{-\linewidth}Q_{(0,0,1,1)}~
		                     \curve[myred]{1,9,3,9,3,11,3,11,1}\\
		P^*               &= \C{2}\circ d\cdot\C{10,4}\circ\C{2}
		                  &  \curve[myblue]{2,10,4,10,4,10,4,10,2}\\
		Q^*               &= \C{3}\circ d\cdot\C{9,5}\circ\C{3} 
		                  &  \curve[myred]{3,9,5,9,5,9,5,9,3}\\
		P^+               &= \C{4}\circ d\cdot\C{ 8,6}\circ\C{4}
		                  &  \curve[myblue]{4,8,6,8,6,8,6,8,4}\\
		Q^+               &= \C{5,7,5}
		                  &  \curve[myred]{5,7,7,7,7,7,7,7,5}\\[1ex]
		P_\mathit{sep}    &= \C{0}\circ P^*\circ\C{2}\circ P^+\circ\C{2}\circ P^+\circ P^+\circ\C{2}\circ P^+\circ\C{2}\circ P^*\circ \C{0}\\
		P_\mathit{enter}  &= (d+1)\cdot\C{4,10}\hspace{1.625em}\circ\C{2}\circ P^+\circ P^+\circ \C{2}\circ P^+\circ \C{2}\circ P^*\circ \C{0}\\
		P                 &= P_\mathit{enter}\circ \bigcirc_{i=0}^{n-2}(P_{\u_i}\circ P_\mathit{sep})\circ P_{\u_{n-1}}\circ\rev{P_\mathit{enter}}\\[1ex]
		Q_\mathit{sep}    &= \C{1}\circ Q^*\circ\C{1}\circ Q^*\circ Q^*\circ Q^*\circ Q^*\circ\C{1}\circ Q^*\circ\C{1}\\
		Q                 &= Q_\mathit{sep}\circ\bigcirc_{k=0}^{n+m-2}(Q_{\v_{k\bmod m}}\circ Q_\mathit{sep})\text{.}
	\end{align*}
	The gadget~$Q^+$ is not used to construct~$P$ and~$Q$, but will be used in a later reduction.
	Observe that matchings of width~$1$ exist for the following pairs of curves:
	$(P^+, Q^+)$,
	$(P^+, Q^*)$,
	$(P^*, Q^*)$, 
	$(P^*, Q_{\v\in V})$, and
	$(P_{\u\in U}, \C{1}\circ Q^*\circ\C{1})$.
	We will make extensive use of these matchings.

\subsection{Yes-instance}\label{sec:sat}
	Consider a nontrivial \YES-instance of \uOV{}.
	We construct a matching of width~$1$ between~$P$ and a subcurve of~$Q$.
	To define this matching, we first label various vertices.

	For~$i\in\{1,\dots,n-1\}$, let~$a_i$ and~$b_i$ respectively be the index in~$P$ of respectively the third and the fourth vertex at location~$2$ in the~$(i-1)$-th copy of~$P_\mathit{sep}$.
	Moreover, let~$a_0$ and~$b_0$ respectively be the index in~$P$ of respectively the first and second vertex at location~$2$ in~$P_\mathit{enter}$.
	Symmetrically, let~$a_n$ and~$b_n$ respectively be the index in~$P$ of respectively the second-to-last and the last vertex at location~$2$ in~$\rev{P_\mathit{enter}}$.
	Similarly, define~$s_i$ and~$t_i$ respectively to be the index in~$P$ of respectively the first and last vertex of the gadgets.

	For~$k\in\{0,\dots,n+m-1\}$, let~$c_k$ be the index in~$Q$ of the central vertex of the~$k$-th copy of~$Q_\mathit{sep}$.
	Let~$l_k$ be the index in~$Q$ of the last vertex at location~$5$ in the second copy of~$Q^*$ of the~$k$-th copy of~$Q_\mathit{sep}$.
	Symmetrically, let~$r_k$ be the index in~$Q$ of the first vertex at location~$5$ in the fifth copy of~$Q^*$ in the~$k$-th copy of~$Q_\mathit{sep}$.
	Similarly, define~$s'_k$ and~$t'_k$ respectively to be the index in~$Q$ of respectively the first and last vertex of the~$k$-th copy of~$Q_\mathit{sep}$.
	We illustrate these indices in Figure~\ref{fig:sepIndices}.

	\begin{figure}[h]\centering%
		\hspace*{\stretch{1}}%
		\curveAnnot[myblue]{
			\draw (3.7,0) -- (-.5,0) node[left=.5] {$s_0$};
			\draw (1.7,9.5) -- (-.5,9.5) node[left=.5] {$a_0$};
			\draw (1.7,25.5) -- (-.5,25.5) node[left=.5] {$b_0$};
			\node[left=.5] at (-.5,41) {$t_0$};
		}{4,10,4,10,4,10,4,10,4,10,	2,	8,6,8,6,8,6,8,	4,	8,6,8,6,8,6,8,	2,	8,6,8,6,8,6,8,	2,	10,4,10,4,10,4,10,	0}%
		\hspace*{\stretch{1}}%
		\curveAnnot[myblue]{
			\node[left=.5] at (-.5,0) {$s_i$};
			\draw (1.7,15.5) -- (-.5,15.5) node[left=.5] {$a_i$};
			\draw (1.7,31.5) -- (-.5,31.5) node[left=.5] {$b_i$};
			\node[left=.5] at (-.5,47) {$t_i$};
		}{0,	10,4,10,4,10,4,10,	2,	8,6,8,6,8,6,8,	2,	8,6,8,6,8,6,8,	4,	8,6,8,6,8,6,8,	2,	8,6,8,6,8,6,8,	2,	10,4,10,4,10,4,10,	0}%
		\hspace*{\stretch{1}}%
		\curveAnnot[myred]{
			\draw (0.7,0) -- (-.5,0) node[left=.5] {$s'_k$};
			\draw (4.7,13.5) -- (-.5,13.5) node[left=.5] {$l_k$};
			\draw (2.7,23.5) -- (-.5,23.5) node[left=.5] {$c_k$};
			\draw (4.7,33.5) -- (-.5,33.5) node[left=.5] {$r_k$};
			\draw (0.7,47) -- (-.5,47) node[left=.5] {$t'_k$};
		}{1,	9,5,9,5,9,5,9,	1,	9,5,9,5,9,5,9,	3,9,5,9,5,9,5,9,	3,9,5,9,5,9,5,9,	3,9,5,9,5,9,5,9,	1,	9,5,9,5,9,5,9,	1}%
		\hspace*{\stretch{1}}%
		\caption{$P_\mathit{enter}$ (left), $P_\mathit{sep}$ (middle), and $Q_\mathit{sep}$ (right) for~$d=4$.\label{fig:sepIndices}}
	\end{figure}
	
	\noindent
	For two curves of equal size, call~$(\phi_1,\phi_2)$ a \emph{synchronous matching} if~$\phi_1=\phi_2$.
	If~$\u$ and~$\v$ are orthogonal~$d$-dimensional boolean vectors, then the synchronous matching between~$P_\u$ and~$Q_\v$ has width~$1$, so~$\dF(P_\u,Q_\v)\leq 1$.
	Since~$U,V$ is a \YES-instance, we can pick~$i^*$ and~$j^*$ such that~$\u_{i^*}\in U$ and~$\v_{j^*}\in V$ are orthogonal vectors.
	Let~$h^*=(j^*-i^*) \bmod m$ and~$k^*=h^*+i^*$.
	As depicted schematically in Figure~\ref{fig:sketchSmall}, we construct a matching of width~$1$ between~$P$ and~$Q[l_{h^*},r_{h^*+n}]$.
	This matching is composed of three matchings of width~$1$ between the following pairs of curves:~$(P[1,a_{i^*}],Q[l_{h^*},c_{k^*}])$,~$(P[a_{i^*},b_{i^*+1}],Q[c_{k^*},c_{k^*+1}])$ and~$(P[b_{i^*+1},|P|],Q[c_{k^*+1},r_{h^*+n}])$.
	These matchings are constructed in Lemmas~\ref{lem:paths} and~\ref{lem:pair}.
	\begin{figure}[h]\centering%
		\hspace*{\stretch{1}}%
		\includegraphics{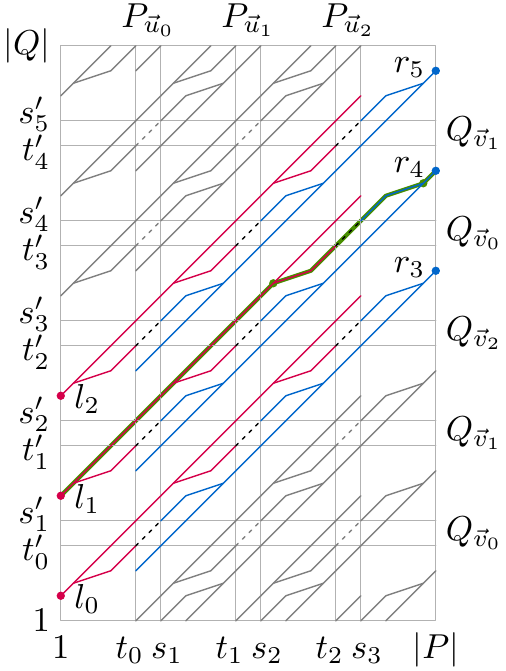}%
		\hspace*{\stretch{1}}%
		\includegraphics{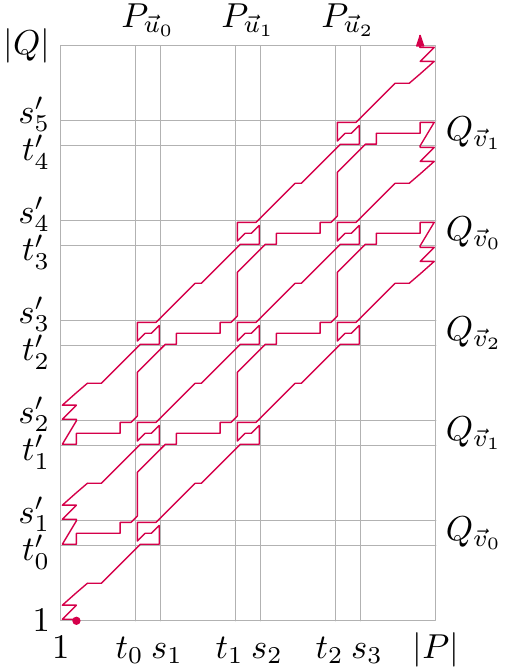}%
		\hspace*{\stretch{1}}%
		\caption{Left: relevant paths in the free space with~$n=m=3$. Dotted: potential matchings between~$P_{\u_i}$ and~$Q_{\v_{k\bmod m}}$. Highlighted: a matching of~$P$ and~$Q[l_1,r_4]$ if~$\u_2$ and~$\v_0$ are orthogonal. Right: a cut for a \NO-instance (schematically).\label{fig:sketchSmall}}%
	\end{figure}

	\begin{lemma}
		Let~$i\in\{0,\dots,n-1\}$ and~$h\in\{0,\dots,m-1\}$, then~$\dF(P[1,a_i],Q[l_h,c_{h+i}])\leq 1$ and~$\dF(P[b_{i+1},|P|],Q[c_{h+i+1},r_{h+n}])\leq 1$.\label{lem:paths}
	\end{lemma}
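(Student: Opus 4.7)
The plan is to construct an explicit width-$1$ continuous matching for the first statement $\dF(P[1, a_i], Q[l_h, c_{h+i}]) \le 1$; the second statement is then proved by a mirror-image construction using the symmetric landmarks $b_{i+1}, r_{h+n}$ and the symmetric structure of the gadgets $P_\mathit{sep}$ and $Q_\mathit{sep}$ (exchanging the roles of $l$ and $r$ while fixing the central $c$).

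The approach is block-by-block. First, decompose
\[P[1, a_i] = P_\mathit{enter} \circ P_{\u_0} \circ P_\mathit{sep} \circ \cdots \circ P_{\u_{i-1}} \circ P_\mathit{sep}[s_i, a_i]\]
and
\[Q[l_h, c_{h+i}] = Q_\mathit{sep}[l_h, t'_h] \circ Q_{\v_{h \bmod m}} \circ Q_\mathit{sep} \circ \cdots \circ Q_{\v_{(h+i-1) \bmod m}} \circ Q_\mathit{sep}[s'_{h+i}, c_{h+i}].\]
The matching I will use is \emph{staggered}: each vector gadget $P_{\u_j}$ is paired with the $\C{1}\circ Q^*\circ\C{1}$ cap at the start of the subsequent $Q_\mathit{sep}$ copy, using the primitive width-$1$ matching $(P_{\u\in U}, \C{1}\circ Q^*\circ\C{1})$, which holds unconditionally. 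Correspondingly, the $P^*$ gadgets at the boundaries of each $P_\mathit{sep}$ (and the trailing $P^*$ of $P_\mathit{enter}$) are paired with adjacent $Q_{\v}$ gadgets via $(P^*, Q_{\v \in V})$, while all interior $P^+$ and $P^*$ gadgets are paired with interior $Q^*$ copies via $(P^+, Q^*)$ and $(P^*, Q^*)$. This staggered alignment deliberately avoids pairing $P_{\u_j}$ directly with any $Q_{\v_k}$, which would require orthogonality---the lemma makes no such assumption, so we must dodge that pairing entirely.

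Width $1$ is verified locally: each gadget-pair matching has width $1$ by the primitives listed before the lemma; every junction between consecutive block matchings involves a value pair from $\{(0,1),(2,1),(2,3),(4,5)\}$, each at distance exactly $1$; and the endpoints are $(P(1), Q(l_h)) = (4, 5)$ and $(P(a_i), Q(c_{h+i})) = (2, 3)$, again at distance $1$. The main technical obstacle is the $P_\mathit{enter}$ alignment: the initial zigzag $(d+1)\cdot\C{4, 10}$ of $P_\mathit{enter}$ must be matched with the residual $Q^*$ oscillations of $Q_\mathit{sep}^{(h)}$ starting from $l_h$. The landmark $l_h$ (the last $5$-vertex in the second $Q^*$ copy of $Q_\mathit{sep}^{(h)}$) is chosen precisely to make the vertex counts match---in the base case $i = 0$, both $P[1, a_0]$ and $Q[l_h, c_h]$ reduce to exactly $2d + 3$ vertices with alternating values differing pairwise by $1$, giving a synchronous width-$1$ matching by direct inspection. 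The general case $i \ge 1$ then follows by appending one staggered cycle at a time, each contributing width $1$ via the construction above.
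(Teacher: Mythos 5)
Your proposal is correct and matches the paper's own argument. The paper phrases the matching as a synchronous (i.e.\ position-for-position) matching on the four concatenated pieces $P[1,a_0]\!\sim\!Q[l_h,c_h]$ and $P[a_{i'},a_{i'+1}]\!\sim\!Q[c_{h+i'},c_{h+i'+1}]$ (and the two mirrors), and observes each piece has width~$1$; your ``staggered'' gadget-by-gadget description is exactly that synchronous matching unrolled, including the key point that $P_{\u}$ lands on a $\C{1}\circ Q^*\circ\C{1}$ cap while $Q_{\v}$ lands on a $P^*$, so no orthogonality is needed. Two small imprecisions that do not affect correctness: only the trailing $P^*$ of each $P_\mathit{sep}$ pairs with a $Q_{\v}$ gadget (the leading $P^*$ of the next $P_\mathit{sep}$ pairs with an interior $Q^*$), and your list of junction pairs omits $(4,3)$, which occurs where the $\C{4}$ at the middle of $P_\mathit{sep}$ meets the $\C{3}$ between two interior $Q^*$ blocks of $Q_\mathit{sep}$.
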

	\begin{proof}
		Observe that the synchronous matching has width~$1$ for the following pairs of curves.
		Cases~\enumi{1} and~\enumi{2} as well as cases~\enumi{3} and~\enumi{4} are symmetric.
		\begin{enumerate}
			\item $P[1,a_0]=(d+1)\cdot\C{4,10}\circ\C{2}$ and~$Q[l_h,c_h]=\C{5,9,3}\circ (d-1)\cdot\C{9,5}\circ\C{9,3}$;
			\item $P[b_n,|P|]$ and~$Q[c_{h+n},r_{h+n}]$;
			\item $P[a_{i'},a_{i'+1}]$ and~$Q[c_{h+i'},c_{h+i'+1}]$ with~$i'\in\{0,\dots,i\}$.
			\item $P[b_{i'},\hspace{2pt}b_{i'+1}]$ and~$Q[c_{h+i'},c_{h+i'+1}]$ with~$i'\in\{i+1,\dots,n-1\}$.
		\end{enumerate}
		These matchings can be concatenated to obtain a matching of width~$1$ between~$P[1,a_{i}]$ and~$Q[l_{h},c_{h+i}]$, and between~$P[b_{i+1},|P|]$ and~$Q[c_{h+i+1},r_{h+n}]$.
	\end{proof}

	\begin{lemma}
		If~$\u_{i^*}$ and~$\v_{j^*}$ are orthogonal, then a matching of width~$1$ between~$P[a_{i^*},b_{i^*+1}]$ and~$Q[c_{k^*},c_{k^*+1}]$ exists for~$k^*=h^*+i^*$ with~$h^*=(j^*-i^*) \bmod m$.\label{lem:pair}
	\end{lemma}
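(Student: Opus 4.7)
The plan is to construct the width-1 matching by splitting both curves at boundary vertices and gluing together three width-1 sub-matchings. First, I note that $k^*\bmod m=(h^*+i^*)\bmod m=j^*$, so $Q_{\v_{k^*\bmod m}}=Q_{\v_{j^*}}$ and I may write
\[
P[a_{i^*},b_{i^*+1}]=P_L\circ P_{\u_{i^*}}\circ P_R\quad\text{and}\quad Q[c_{k^*},c_{k^*+1}]=Q_L\circ Q_{\v_{j^*}}\circ Q_R,
\]
where $P_L$ is the tail of the $(i^*-1)$-th copy of $P_\mathit{sep}$ starting at $a_{i^*}$, $P_R$ is the head of the $i^*$-th copy of $P_\mathit{sep}$ ending at $b_{i^*+1}$, and $Q_L$, $Q_R$ are the two halves of consecutive copies of $Q_\mathit{sep}$ bordering $c_{k^*}$ and $c_{k^*+1}$.

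The central matching $P_{\u_{i^*}}\to Q_{\v_{j^*}}$ is where orthogonality is used. Both curves have the same simplified vertex count and the same alternating shape: under the synchronous matching, endpoints pair as $(0,1)$, interior valleys as $(4,3)$, and the $i$-th interior peak as $(10-2u_i,\,9+2v_i)$, giving pointwise distance $|1-2(u_i+v_i)|$. This is at most $1$ iff $u_iv_i=0$ for every $i$, which is exactly orthogonality of $\u_{i^*}$ and $\v_{j^*}$, so the synchronous matching has width $1$.

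For the flanking sub-matchings $P_L\to Q_L$ and $P_R\to Q_R$, I would further decompose $P_L$ and $P_R$ into $P^+$-, $P^*$-, and $P^+\circ P^+$-shaped blocks (separated by $\C{0}$- and $\C{2}$-vertices), and $Q_L$ and $Q_R$ into matching $Q^*$- and $Q^*\circ Q^*$-shaped blocks (separated by $\C{1}$- and $\C{3}$-vertices coming from endpoints of $Q^*$ and the interior $\C{1}$'s of $Q_\mathit{sep}$). The listed primitive matchings $(P^+,Q^*)$, $(P^*,Q^*)$, $(P^*,Q_{\v})$, and $(P_{\u},\C{1}\circ Q^*\circ\C{1})$, together with the analogous synchronous matching $(P^+\circ P^+,\,Q^*\circ Q^*)$ (same pointwise distance-one check on $4d+1$ paired vertices) and routine endpoint substitutions (swapping $\C{2}\leftrightarrow\C{3}$ or $\C{0}\leftrightarrow\C{1}$ at block boundaries preserves width $1$), give width-1 matchings for every block pair; these concatenate along shared boundary vertices to yield $P_L\to Q_L$ and $P_R\to Q_R$.

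Finally, I would concatenate the three sub-matchings at the $\C{0}$-to-$\C{1}$ boundary on either side of $P_{\u_{i^*}}$ and $Q_{\v_{j^*}}$ (distance $1$) to obtain the claimed width-1 matching. The main obstacle I anticipate is combinatorial bookkeeping: enumerating the block decompositions of $P_L,P_R,Q_L,Q_R$ so that they correspond consistently and, in particular, verifying that the double block $P^+\circ P^+$ (whose simplification merges the pair of adjacent interior $\C{4}$-vertices into one) is paired with two consecutive $Q^*$'s (whose simplification merges the pair of adjacent interior $\C{3}$-vertices). Once the correspondence is fixed, each sub-matching reduces to a direct distance check on $2d+1$ or $4d+1$ paired vertices.
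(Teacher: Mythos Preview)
Your overall decomposition into $P_L\circ P_{\u_{i^*}}\circ P_R$ versus $Q_L\circ Q_{\v_{j^*}}\circ Q_R$ and the use of the synchronous matching on the middle piece are correct and match the paper. The gap is in the flanking matchings: the block counts do not line up the way you assume. Concretely,
\[
P_L=P[a_{i^*},t_{i^*}]=\C{2}\circ P^+\circ P^+\circ\C{2}\circ P^+\circ\C{2}\circ P^*\circ\C{0},
\qquad
Q_L=Q[c_{k^*},t'_{k^*}]=Q^*\circ Q^*\circ\C{1}\circ Q^*\circ\C{1}.
\]
So $P_L$ contains one $P^+\circ P^+$ block, one additional $P^+$ block, and one $P^*$ block, whereas $Q_L$ contains only \emph{three} copies of $Q^*$. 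If, as you propose, $P^+\circ P^+$ is paired synchronously with $Q^*\circ Q^*$, you have consumed two of the three $Q^*$'s and are left with a single $Q^*$ to absorb both the remaining $P^+$ and the $P^*$. Your list of primitive matchings does not cover that situation, and in fact no width-$1$ synchronous matching of $P^+\circ P^*$ (or $\C{2}\circ P^+\circ\C{2}\circ P^*\circ\C{0}$) with a single $\C{1}\circ Q^*\circ\C{1}$ exists; the curves have different lengths and different extrema.

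The paper resolves this asymmetry the other way around: it matches $P[b_{i^*},t_{i^*}]=\C{2}\circ P^+\circ\C{2}\circ P^*\circ\C{0}$ synchronously with $Q^*\circ\C{1}\circ Q^*\circ\C{1}$ (two $Q^*$'s), and then matches the double block $\C{2}\circ P^+\circ P^+\circ\C{2}$ with a \emph{single} $Q^*$. That last matching cannot be synchronous; the paper constructs it explicitly by parking long portions of the two $P^+$'s at a single point $\C{7}$ on $Q^*$ (this is the ``fork'' in Figure~\ref{fig:fork}). This non-synchronous fold is the missing idea in your proposal.
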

	\begin{proof}
		We have~$P[a_{i^*},b_{i^*+1}]=P[a_{i^*},t_{i^*}]\circ P_{\u_{i^*}}\circ P[t_{i^*+1},b_{i^*+1}]$.
		Since~$j^*=k^* \bmod m$, we have~$Q[c_{k^*},c_{k^*+1}]=Q[c_{k^*},t'_{k^*}]\circ Q_{\v_{j^*}}\circ Q[s'_{k^*+1},c_{k^*+1}]$.
		Since~$\u_{i^*}$ and~$\v_{j^*}$ are orthogonal, the synchronous matching between~$P_{\u_{i^*}}$ and~$Q_{\v_{j^*}}$ has width~$1$.
		It remains to show that there exists a matching of width~$1$ between (a)~$P[a_{i^*},t_{i^*}]$ and~$Q[c_{k^*},t'_{k^*}]$, and between (b)~$P[t_{i^*+1},b_{i^*+1}]$ and~$Q[s'_{k^*+1},c_{k^*+1}]$.
		We show case (a), the other case is symmetric.
		We have~$P[a_{i^*},t_{i^*}]=P[a_{i^*},b_{i^*}]\circ P[b_{i^*},t_{i^*}]$ and~$Q[c_{k^*},t'_{k^*}]=Q^*\circ Q^*\circ\C{1}\circ Q^*\circ\C{1}$.
		The synchronous matching between~$P[b_{i^*},t_{i^*}]$ and~$Q^*\circ\C{1}\circ Q^*\circ\C{1}$ has width~$1$, so it suffices to construct a matching between~$P[a_{i^*},b_{i^*}]=\C{2}\circ P^+\circ P^+\circ \C{2}$ and~$Q^*$.
		We illustrate such a matching in Figure~\ref{fig:fork}.
		Recall that~$d\notin\bigO(1)$, so assume that~$d\geq 2$.
		We can view~$P[a_{i^*},b_{i^*}]$ as~$(\C{2}\circ (d-1)\cdot\C{8,6})\circ(\C{6,8})\circ(\C{8,4,8,6})\circ((d-1)\cdot\C{6,8})\circ(\C{8,2})$ and~$Q^*$ as~$(\C{3}\circ (d-2)\cdot\C{9,5}\circ\C{9,7})\circ(\C{7})\circ(\C{7,5,9,7})\circ(\C{7})\circ(\C{7,3})$.
		The second and fourth parenthesized terms~$\C{6,8}$ and~$(d-1)\cdot\C{6,8}$ in this view of~$P[a_{i^*},b_{i^*}]$ can be matched to the second and fourth terms~$\C{7}$ of this view of~$Q^*$ with width~$1$.
		Moreover, we can match the first, third, and fifth pairs of parenthesized terms in these views synchronously with width~$1$.
		Thus we obtain a matching of width~$1$ between~$P[a_{i^*},b_{i^*}]$ and~$Q^*$.
	\end{proof}

	\begin{figure}\centering%
		\includegraphics{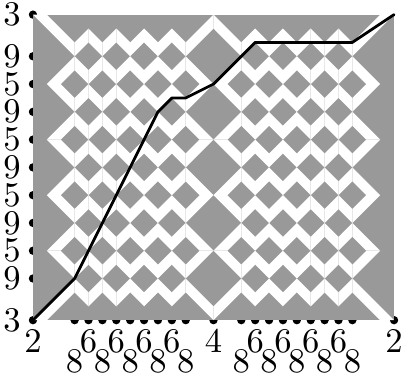}%
		\caption{$1$-free space (white) and a matching of width~$1$ for~$\C{2}\circ P^+\circ P^+\circ \C{2}$ and~$Q^*$ with~$d=5$.\label{fig:fork}}%
	\end{figure}

	\begin{lemma}
		If~$(U,V)$ is a nontrivial \YES-instance of \OV{}, then there is a matching of width~$1$ between~$P$ and~$Q[l_h,r_{h+n}]$ for some~$h\in\{0,\dots,m-1\}$.\label{lem:doFmatch}
	\end{lemma}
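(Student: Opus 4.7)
The plan is to simply splice together the three matchings already constructed in Lemmas~\ref{lem:paths} and~\ref{lem:pair}. Since $(U,V)$ is a nontrivial \YES-instance of \OV{}, I can pick indices $i^*\in\{0,\dots,n-1\}$ and $j^*\in\{0,\dots,m-1\}$ such that $\u_{i^*}$ and~$\v_{j^*}$ are orthogonal. Following the notation introduced just before Lemma~\ref{lem:paths}, set $h^* = (j^*-i^*)\bmod m$ and $k^* = h^*+i^*$, so that $j^* \equiv k^* \pmod m$; this alignment of residues is exactly what makes the middle piece hit the right $\v_{j^*}$-gadget inside~$Q$. I will choose $h=h^*$ as the value claimed to exist by the lemma.

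Next I instantiate the two auxiliary results. Applying Lemma~\ref{lem:paths} with $i = i^*$ and $h=h^*$ yields matchings of width~$1$ between $P[1,a_{i^*}]$ and $Q[l_{h^*},c_{k^*}]$, and between $P[b_{i^*+1},|P|]$ and $Q[c_{k^*+1},r_{h^*+n}]$. Applying Lemma~\ref{lem:pair} to the orthogonal pair $\u_{i^*},\v_{j^*}$ (with the same $h^*,k^*$) gives a matching of width~$1$ between $P[a_{i^*},b_{i^*+1}]$ and $Q[c_{k^*},c_{k^*+1}]$.

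Finally, I concatenate the three matchings in order along the $P$-axis. Since the consecutive domains meet at $a_{i^*}$ and $b_{i^*+1}$ on the $P$-side and at $c_{k^*}$ and $c_{k^*+1}$ on the $Q$-side, and each is a (continuous non-decreasing) matching onto the corresponding subcurve, their concatenation is a valid matching from $P$ to the subcurve $Q[l_{h^*},r_{h^*+n}]$, whose width is the maximum of the three constituent widths and hence at most~$1$. This establishes the lemma with $h=h^*$. There is no real obstacle here beyond bookkeeping of the indices; the substantive work was already carried out in Lemmas~\ref{lem:paths} and~\ref{lem:pair}, and the only thing to verify is that the three pieces share their endpoints so that concatenation produces a genuine matching rather than two disjoint ones.
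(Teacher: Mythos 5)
Your proposal is correct and follows the same approach as the paper: the text just before Lemma~\ref{lem:paths} sets up exactly the choice $h^*=(j^*-i^*)\bmod m$, $k^*=h^*+i^*$, and the decomposition into three pieces, which are then supplied by Lemmas~\ref{lem:paths} and~\ref{lem:pair} and concatenated at the shared endpoints $(a_{i^*},c_{k^*})$ and $(b_{i^*+1},c_{k^*+1})$. No gaps; this is the intended argument.
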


	\begin{corollary}
		If~$(U,V)$ is a nontrivial \YES-instance of \OV{}, then~$\doF(P,Q)\leq 1$.\label{cor:doFyes}
	\end{corollary}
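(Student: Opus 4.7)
The plan is to deduce the corollary directly from Lemma~\ref{lem:doFmatch} together with the definition of the partial Fr\'echet distance. Lemma~\ref{lem:doFmatch} already does all the real work: it produces an index~$h\in\{0,\dots,m-1\}$ together with a continuous matching of width~$1$ between~$P$ and the subcurve~$Q[l_h,r_{h+n}]$ of~$Q$. So essentially no new construction is needed here.

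First I would verify that~$Q[l_h,r_{h+n}]$ is a well-defined subcurve, i.e.\ that~$0\leq l_h\leq r_{h+n}\leq |Q|$. This is immediate from the definitions of~$l_h$ and~$r_{h+n}$ as indices of particular vertices inside the~$h$-th and~$(h+n)$-th copies of~$Q_\mathit{sep}$ along~$Q=Q_\mathit{sep}\circ\bigcirc_{k=0}^{n+m-2}(Q_{\v_{k\bmod m}}\circ Q_\mathit{sep})$; since~$h\leq m-1$ and~$h+n\leq n+m-1$, both indices lie in~$[1,|Q|]$ and~$l_h$ precedes~$r_{h+n}$.

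Second, the existence of a matching of width~$1$ between~$P$ and~$Q[l_h,r_{h+n}]$ means, by the definition of the width of a matching and the Fr\'echet distance, that~$\dF(P,Q[l_h,r_{h+n}])\leq 1$. Then, using the definition
\[
\doF(P,Q)=\inf_{0\leq a\leq b\leq|Q|}\dF(P,Q[a,b]),
\]
and taking~$a=l_h$ and~$b=r_{h+n}$ as one admissible choice, we conclude~$\doF(P,Q)\leq\dF(P,Q[l_h,r_{h+n}])\leq 1$.

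There is no real obstacle here: the only mild point worth double-checking is that the partial Fr\'echet distance as defined allows~$a$ and~$b$ to be non-integral (in the continuous case) and that~$l_h,r_{h+n}$ are valid choices of~$a,b$, both of which are already built into the indexing convention of Section~\ref{sec:prelim}. Hence the corollary follows in one or two lines once Lemma~\ref{lem:doFmatch} is in hand.
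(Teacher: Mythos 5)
Your proof is correct and matches the paper's (implicit) argument: the paper states Corollary~\ref{cor:doFyes} as an immediate consequence of Lemma~\ref{lem:doFmatch}, precisely because the matching of width~$1$ between~$P$ and the subcurve~$Q[l_h,r_{h+n}]$ witnesses~$\doF(P,Q)\leq 1$ by the definition~$\doF(P,Q)=\inf_{0\leq a\leq b\leq|Q|}\dF(P,Q[a,b])$. Your extra sanity check that~$l_h$ and~$r_{h+n}$ are valid indices with~$l_h\leq r_{h+n}$ is reasonable diligence but adds nothing beyond what the indexing in Section~\ref{sec:sat} already guarantees.
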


\subsection{No-instance}\label{sec:unsat}
	Consider a \NO-instance of \uOV{}.
	To show that~$\doF(P,Q)\geq 3$ we construct of cut of width~$3$ between~$P$ and~$Q$ that starts on~$[1,|P|]\times\{1\}$ and ends on~$[1,|P|]\times\{|Q|\}$.
	Our cut consists of the following types of elementary pieces.
	\begin{enumerate}
		\item If all points of~$Q[q_j,q_{j'}]$ are at distance at least~$\e$ from~$p_i$, then there are cuts of width~$\e$ from~$(i,j)$ to~$(i,j')$ and from~$(i,j')$ to~$(i,j)$.
		\item Symmetrically, if all points of~$P[p_i,p_{i'}]$ are at distance at least~$\e$ from~$q_j$, then there are cuts of width~$\e$ from~$(i,j)$ to~$(i',j)$ and from~$(i',j)$ to~$(i,j)$.
		\item If~$i<i'$, there is a cut of width~$\e$ from~$(i,j)$ to~$(i',j+1)$ if~$|p_i-q_j|\geq\e$ and~$|p_{i'}-q_{j+1}|\geq\e$ and either~$p_i+2\e\leq p_{i'}$ and~$q_j\geq q_{j+1}$ or~$p_i-2\e\geq p_{i'}$ and~$q_j\leq q_{j+1}$.
		\item Symmetrically if~$j<j'$, there is a cut of width~$\e$ from~$(i+1,j')$ to~$(i,j)$ if~$|p_i-q_j|\geq\e$ and~$|p_{i+1}-q_{j'}|\geq\e$ and either~$p_i\geq p_{i+1}$ and~$q_j+2\e\leq q_{j'}$ or~$p_i\leq p_{i+1}$ and~$q_j-2\e\geq q_{j'}$.
	\end{enumerate}
	Types \enumi{1} and \enumi{2} trivially provide cuts consisting of a single straight path.
	The cuts of types \enumi{3} and \enumi{4} are constructed in Lemma~\ref{lem:stepCut}.
	Lemma~\ref{lem:cutVec} uses these to cut between~$P_\u$ and~$Q_\v$.
	\begin{lemma}
		If~$i<i'$, there is a cut of width~$\e$ from~$(i,j)$ to~$(i',j+1)$ if~$|p_i-q_j|\geq\e$ and~$|p_{i'}-q_{j+1}|\geq\e$ and either~$p_i+2\e\leq p_{i'}$ and~$q_j\geq q_{j+1}$ or~$p_i-2\e\geq p_{i'}$ and~$q_j\leq q_{j+1}$.
		\label{lem:stepCut}
	\end{lemma}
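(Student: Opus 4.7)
The plan is to handle case~1 ($p_i + 2\e \leq p_{i'}$ and $q_j \geq q_{j+1}$) in detail; case~2 follows by the obvious reflection $(p,q) \mapsto (-p,-q)$, which preserves $|p-q|$ and interchanges the two cases. The key observation is that the cut can be built with complexity~$2$ using two vertical paths, one at $x=i$ and one at $x=i'$, so that no point of the cut has its first coordinate in the open interval $(i,i')$; this sidesteps any concern about $P$ oscillating on $[i,i']$. The two paths are joined by a single jump, which requires $i \leq i'$ (given) and $y_1 \geq y_2$, where $y_1$ is the $y$-endpoint of Path~$1$ and $y_2$ is the $y$-startpoint of Path~$2$.

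First I would split on the signs of $p_i - q_j$ and $p_{i'} - q_{j+1}$ (each of absolute value $\geq \e$). The combination $p_i \geq q_j + \e$ together with $p_{i'} \leq q_{j+1} - \e$ is immediately ruled out: together with $q_j \geq q_{j+1}$ it gives $p_{i'} - p_i \leq -2\e$, contradicting the hypothesis $p_{i'} - p_i \geq 2\e$. In the two ``same-sign'' subcases (both $p < q$ at the endpoints, or both $p > q$), linearity of $Q$ on $[j,j+1]$ combined with the $2\e$ gap between $p_i$ and $p_{i'}$ yields $|p_i - Q(y)| \geq \e$ and $|p_{i'} - Q(y)| \geq \e$ for all $y \in [j,j+1]$, so the choice $y_1 := j+1$, $y_2 := j$ makes both vertical paths valid and trivially satisfies the jump condition.

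The main obstacle is the remaining ``crossing'' subcase $q_j \geq p_i + \e$ and $q_{j+1} \leq p_{i'} - \e$, where the $\e$-free space genuinely separates $(i,j)$ from $(i',j+1)$ inside the rectangle $[i,i'] \times [j,j+1]$. At $x = i$, Path~$1$ remains outside the free space exactly as long as $Q(y) \geq p_i + \e$; by linearity of $Q$ this holds for $y \leq y_1^\star := j + (q_j - p_i - \e)/(q_j - q_{j+1})$. Symmetrically, Path~$2$ at $x = i'$ remains outside as long as $Q(y) \leq p_{i'} - \e$, i.e.\ $y \geq y_2^\star := j + (q_j - p_{i'} + \e)/(q_j - q_{j+1})$. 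A direct manipulation shows that $y_1^\star \geq y_2^\star$ is equivalent to $p_{i'} - p_i \geq 2\e$, which is precisely the hypothesis; this is where the $2\e$ bound is used tightly.

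To conclude, I would set $y_1 := \min(y_1^\star, j+1)$ and $y_2 := \max(y_2^\star, j)$ so that both vertical paths stay inside $[j,j+1]$ and land on the prescribed endpoints $(i,j)$ and $(i',j+1)$; a short verification using $y_2^\star \leq j+1$ and $y_1^\star \geq j$ (both immediate from the hypotheses) shows that the capping preserves $y_1 \geq y_2$. The degenerate subcase $q_j = q_{j+1}$ is easier: $Q$ is constant on $[j,j+1]$, so both vertical paths are trivially valid and we again take $y_1 = j+1$, $y_2 = j$. Case~2 follows by the symmetric reflection argument.
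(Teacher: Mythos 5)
Your proof is correct, and it builds the same kind of cut as the paper: a cut of complexity two, consisting of a vertical piece at $x=i$ followed by a vertical piece at $x=i'$, joined by a single jump that respects the ordering constraints $\gamma_{1,1}(1)\leq\gamma_{1,2}(0)$ and $\gamma_{2,1}(1)\geq\gamma_{2,2}(0)$. Where the paper and your write-up diverge is in how the splice is chosen. The paper picks a \emph{single} height $y$ (the smallest $y\in[j,j+1]$ at which $|p_i-Q(y)|=\e$, or $y=j+1$) and then verifies directly, using monotonicity of $Q$ on $[j,j+1]$ and $p_i+2\e\leq p_{i'}$, that the second vertical segment from $(i',y)$ to $(i',j+1)$ also stays outside the free space; the chain $Q(y')\leq Q(y)\leq p_i+\e\leq p_{i'}-\e$ closes the argument in one line. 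You instead split on the signs of $p_i-q_j$ and $p_{i'}-q_{j+1}$, dispose of three subcases quickly, and in the genuine ``crossing'' subcase compute the latest admissible endpoint $y_1^\star$ for the first segment and the earliest admissible startpoint $y_2^\star$ for the second, reducing the jump condition $y_1\geq y_2$ to exactly $p_{i'}-p_i\geq 2\e$. This is longer, but it makes the role of the $2\e$ gap completely transparent, and your version dots an i the paper leaves implicit: when no $y\in[j,j+1]$ satisfies $|p_i-Q(y)|=\e$ (so $y=j+1$ and possibly $Q(y)>p_i+\e$), the intermediate inequality $Q(y)\leq p_i+\e$ in the paper's chain can fail, and one must fall back on the hypothesis $|p_{i'}-q_{j+1}|\geq\e$ directly, which your sign-based subcases do automatically. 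The reflection $(p,q)\mapsto(-p,-q)$ you use to dispatch the second hypothesis case is exactly the symmetry the paper invokes, so that part matches.
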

	\begin{proof}
		Let~$|p_i-q_j|\geq\e$ and~$|p_{i'}-q_{j+1}|\geq\e$ and suppose that~$p_i+2\e\leq p_{i'}$ and~$q_j\geq q_{j+1}$ (the other case where~$p_i-2\e\geq p_{i'}$ and~$q_j\leq q_{j+1}$ is symmetric).
		Let~$y$ be the minimum value in~$[j,j+1]$ for which~$|p_i-Q(y)|=\e$, or~$y=j+1$ if there is no such value.
		Then there is a straight cut of width~$\e$ from~$(i,j)$ to~$(i,y)$.
		We show that~$p_{i'}$ is at distance at least~$\e$ from any point~$Q(y')$ with~$y'\in[y,j+1]$.
		Indeed,~$Q(y')\leq Q(y)\leq p_i+\e\leq p_{i'}-\e$.
		Hence, the straight cut from~$(i',y)$ to~$(i',j+1)$ also has width at least~$\e$, so the composition of the straight cuts yields a cut of width~$\e$ from~$(i,j)$ to~$(i',j+1)$.
	\end{proof}

	\begin{lemma}
		If~$\u$ and~$\v$ are not orthogonal, then there is a cut of width~$3$ between~$P_\u$ and~$Q_\v$ that starts at~$(|P_\u|-1,1)$ and ends at~$(2,|Q_\v|)$.\label{lem:cutVec}
	\end{lemma}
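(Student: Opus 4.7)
The non-orthogonality of~$\u$ and~$\v$ yields an index~$i^*\in\{1,\dots,d\}$ with~$u_{i^*}=v_{i^*}=1$, so~$p_{2i^*}=8$ and~$q_{2i^*}=11$. The integer vertex pair~$(2i^*,2i^*)$ therefore lies exactly on the boundary of the~$3$-free space (since~$|p_{2i^*}-q_{2i^*}|=3$), while its four axis-aligned integer neighbours~$(2i^*\pm 1,2i^*)$ and~$(2i^*,2i^*\pm 1)$ lie strictly outside that free space (distances~$7$ and~$5$ respectively). This isolated ``pinch'' will be the structural anchor of the cut.

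My plan is to assemble the cut as an ordered sequence of elementary pieces of types~\enumi{1}--\enumi{4} — the straight pieces introduced before Lemma~\ref{lem:stepCut} together with the diagonal pieces supplied by that lemma — glued by the right-down jumps allowed between consecutive pieces. Three structural ingredients enter. First, a horizontal type-\enumi{2} piece slides along the bottom row~$y=1$ starting at~$(|P_\u|-1,1)=(2d+1,1)$; this is valid because~$q_1=1$ while~$P_\u(x)\ge 4$ throughout~$[2,2d+1]$, so~$|P_\u(x)-q_1|\ge 3$ along the whole piece. Second, a symmetric horizontal type-\enumi{2} piece along the top row~$y=|Q_\v|=2d+2$ (where~$q_{|Q_\v|}=1$ as well) delivers the cut to the end point~$(2,2d+2)$. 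Third, a middle ``spine'' runs along the column~$x=2i^*$: with~$P_\u=8$ here, the distance~$|8-Q_\v(y)|$ is at least~$3$ precisely when~$Q_\v(y)\le 5$ or~$Q_\v(y)=11$, giving vertical type-\enumi{1} segments around each~$Q_\v$-valley and at each pinch~$y=2j$ with~$v_j=1$. The gaps between these vertical segments are crossed by type-\enumi{3}/\enumi{4} diagonals from Lemma~\ref{lem:stepCut}, each detouring via a neighbouring column in which the $P_\u$-value differs from~$8$ by at least~$2\e=6$ — either a~$P_\u$-peak at~$10$ when some~$u_i=0$, or a~$P_\u=0$ endpoint — so that the required sign conditions on consecutive $P$- and $Q$-values are satisfied.

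The hardest part will be the combinatorial bookkeeping: ordering the chosen pieces so that every inter-piece transition respects both~$\gamma_{1,i}(1)\le\gamma_{1,i+1}(0)$ and~$\gamma_{2,i}(1)\ge\gamma_{2,i+1}(0)$, while within each piece the distance really stays~$\ge 3$, and the overall start and end land at~$(2d+1,1)$ and~$(2,2d+2)$. The non-orthogonality enters at exactly one crucial spot: crossing the $i^*$-th $Q_\v$-bump is possible only because the pinch point~$(2i^*,2i^*)$ sits on the free-space boundary, allowing the spine to close. For an orthogonal pair every such crossing attempt would instead leave a free pocket, which must be the case since Section~\ref{sec:sat} then yields a matching of width~$1$.
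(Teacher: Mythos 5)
Your anchor observation is right -- the pinch point $(2i^*,2i^*)$ with $|p_{2i^*}-q_{2i^*}|=3$ is indeed what makes the cut possible, and the paper's cut does cross there -- but the cut you propose is a different shape from the paper's and, as described, cannot be made valid. First, the free space near the pinch is the cone $c_2\,|y-2i^*|\leq c_1\,|x-2i^*|$ (since $P$ is locally maximal in $x$ and $Q$ locally maximal in $y$ there), so the complement opens only to the \emph{left and right}: a cut piece can pass through $(2i^*,2i^*)$ only horizontally, never vertically. Your "spine closes at the pinch" step is exactly the step that fails -- the column $x=2i^*$ is strictly inside the $3$-free space for every $y$ near $2i^*$ other than $y=2i^*$ itself, so no vertical piece can extend past it. The paper crosses the pinch with a horizontal piece from $(2b+3,2b+2)$ to $(2b+1,2b+2)$, which works because $P_\u\leq 8$ on $[2b+1,2b+3]$ while $q_{2b+2}=11$. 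Second, your proposed detours to a $P_\u$-peak at $10$ do not satisfy Lemma~\ref{lem:stepCut}: type-\enumi{3}/\enumi{4} require $|p_i-p_{i'}|\geq 2\e=6$, but $|10-8|=2$. The only $P_\u$-values at distance $\geq 6$ from $8$ are the endpoint $0$'s, which would force long left/right excursions; combined with the right-down jump rule between pieces, returning to the $x=2i^*$ column after such a detour and still making upward progress is not possible the way you describe.

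There are also small arithmetic slips: after the paper's degenerate-vertex cleanup the last internal $4$ of $P_\u$ and last $3$ of $Q_\v$ disappear, so $|P_\u|=|Q_\v|=2d+1$ (not $2d+2$). In particular the cut is required to start at $(|P_\u|-1,1)=(2d,1)$; your claimed start $(2d+1,1)$ has $|P_\u(2d+1)-q_1|=|0-1|=1<3$ and sits inside the free space. The paper's actual cut takes a different route entirely: it climbs the right edge $x=2d+1$ (where $P=0$), descends diagonally down-left via $2(d-1-b)$ type-\enumi{4} pieces, crosses the pinch horizontally, descends further to $(1,2)$, climbs the left edge $x=1$, and exits -- i.e.\ the vertical "spines" are the two boundary columns with $P=0$, which are out of free space for almost their whole length, not the $P=8$ column at $x=2i^*$, which is in free space for most of its length.
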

	\begin{proof}
		Since~$\u$ and~$\v$ are not orthogonal, $u_b=v_b=1$ for some index~$b\in\{0,\dots,d-1\}$.
		Compose a cut of width~$3$ (see also Figure~\ref{fig:cutVec}) out of these elementary pieces:
		\enumi{3} from~$(|P_\u|-1,1)$ to~$(|P_\u|,2)$; \enumi{2} to~$(|P_\u|,|Q_\v|-1)=(2d+1,2d)$; $2(d-1-b)$ copies of~\enumi{4} to~$(2b+3,2b+2)$; \enumi{1} to~$(2b+1,2b+2)$; $2b$ copies of~\enumi{4} to~$(1,2)$; \enumi{2} to~$(1,|Q_\v|-1)$; and~\enumi{3} to~$(2,|Q_\v|)$.
	\end{proof}
	\begin{figure}\centering%
		\includegraphics{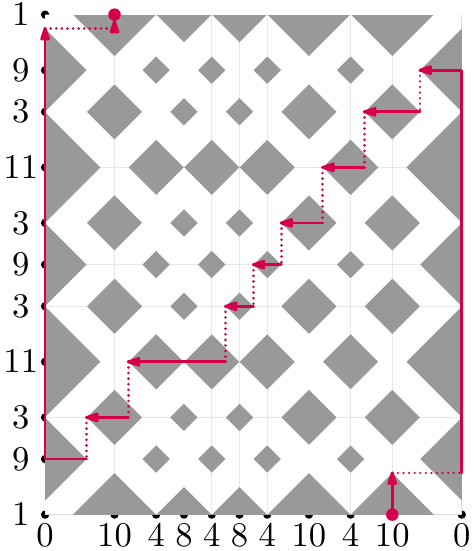}%
		\caption{$3$-free space with a cut of width~$3$ for~$P_{(0,1,1,0,0)}$ and~$Q_{(0,1,0,1,0)}$.\label{fig:cutVec}}%
	\end{figure}

	\begin{figure}\centering%
		\includegraphics[scale=.78]{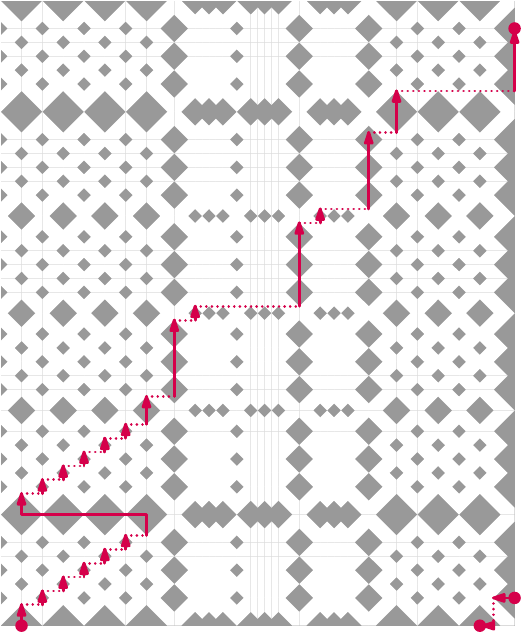}%
		\includegraphics[scale=.78]{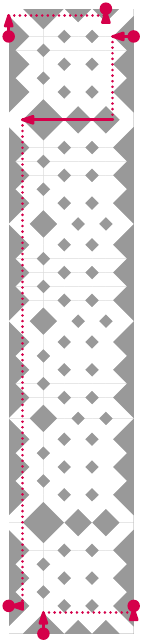}%
		\includegraphics[scale=.78]{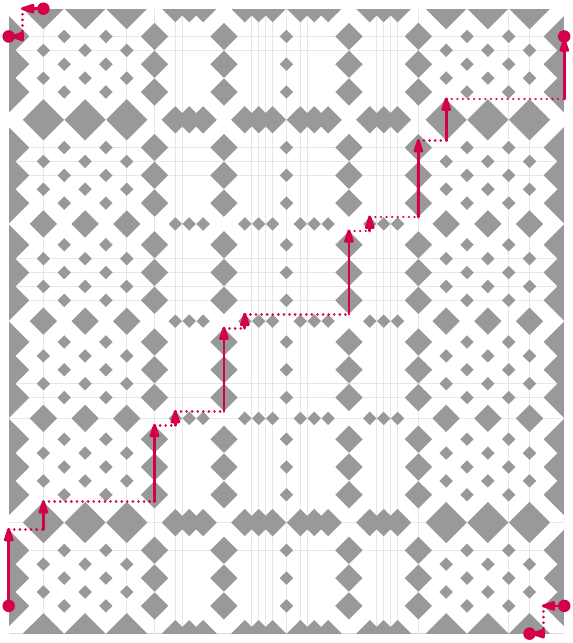}%
		\includegraphics[scale=.78]{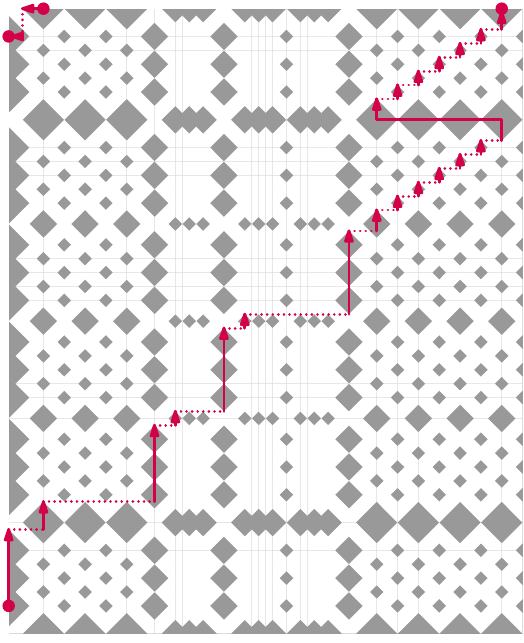}\\%
		\includegraphics[scale=.78]{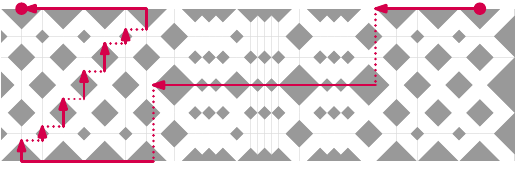}%
		\includegraphics[scale=.78]{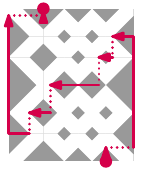}%
		\includegraphics[scale=.78]{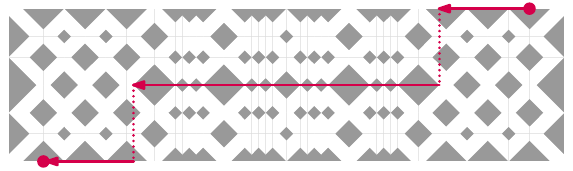}%
		\includegraphics[scale=.78]{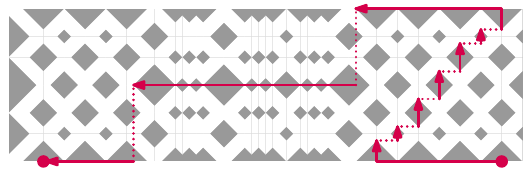}%
		\caption{The~$3$-free space with cuts of width~$3$ for (from left to right) $P_\mathit{enter}$,~$P_{(0,1,1)}$,~$P_\mathit{sep}$, or~$\rev{P_\mathit{enter}}$ and (top row)~$Q_\mathit{sep}$ or (bottom row)~$Q_{(0,1,0)}$.\label{fig:cuts}}%
	\end{figure}
	\noindent
	For the purpose of disambiguation, we will refer to the subcurve~$Q_{\v_{k\bmod m}}$ of~$Q$ simply as~$Q_{\v_k}$.
	The cuts given by Lemma~\ref{lem:cutVec} are connected as follows, as illustrated in Figures~\ref{fig:sketchSmall} and~\ref{fig:cuts}.
	\begin{enumerate}
	\item[\enumi{a}] For~$0\leq h\leq m-1$ and~$0\leq i\leq n-2$, we cut from the end of the cut between~$P_{\u_i}$ and~$Q_{\v_{h+i}}$ to the start of the cut between~$P_{\u_{i+1}}$ and~$Q_{\v_{h+i+1}}$.
	\item[\enumi{b}] Furthermore, we cut from~$(2,s'_h)$ to the start of the cut between~$P_{\u_{0}}$ and~$Q_{\v_h}$.
	\item[\enumi{c}] Similarly, we cut from the end of the cut between~$P_{\u_{n-1}}$ and~$Q_{\v_{h+n-1}}$ to~$(|P|-1,t'_{h+n})$.
	\item[\enumi{d}] Finally, for~$0\leq h\leq m-2$, we cut from~$(|P|-1,t'_{h+n})$ to~$(2,s'_{h+1})$.
	\end{enumerate}
	Composing these cuts yields a cut of width~$3$ from~$(2,1)$ to~$(|P|-1,|Q|)$.

	We believe that the illustrations of Figure~\ref{fig:cuts} are more helpful than the formal definitions of such cuts.
	The cuts of types~\enumi{a}, \enumi{b}, and~\enumi{c} start or end with the cut illustrated in the top and bottom of the second column of the top row of Figure~\ref{fig:cuts}.
	The remainder of the cuts of type~\enumi{a} is illustrated as the central cut in the third column of the top row.
	The cuts of type~\enumi{b} start as illustrated in the first column of the top row.
	Similarly, the cuts of type~\enumi{c} end as illustrated in the last column of the top row.
	The cuts of type~\enumi{d} are more complicated and start with the last column of the bottom row.
	Ignoring small cuts in corners, this cut is followed by the central cut of the second column of the top row, and the cut in the third column of the bottom row, repeated~$n-1$ times, followed by a final copy of the central cut of the second column of the top row and the cut of the first column of the bottom row.

	Whereas it should be evident why the cuts in the top row exist, this may not be clear for the cuts in the bottom row.
	In particular, a central elementary piece of type~\enumi{1} exists only if the corresponding vector~$\v_{h+i}$ contains a one.
	However, this is the case since all vectors are nonzero, since our instance is nontrivial.
	
	\begin{lemma}
		If~$(U,V)$ is a nontrivial \NO-instance of \OV{}, then there is a cut of width~$3$ from~$(2,1)$ to~$(|P|-1,|Q|)$.\label{lem:doFcut}
	\end{lemma}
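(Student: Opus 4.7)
The plan is to assemble the four connecting cuts~\enumi{a}--\enumi{d} described above into a single cut of width~$3$ from $(2,1)$ to $(|P|-1,|Q|)$, using Lemma~\ref{lem:cutVec} as the core building block and the elementary pieces of types~\enumi{1}--\enumi{4} (with Lemma~\ref{lem:stepCut} providing those of types~\enumi{3} and~\enumi{4}) to stitch them together. First I would handle the forward traversal: for each fixed $h\in\{0,\dots,m-1\}$, build a cut of width~$3$ from $(2,s'_h)$ to $(|P|-1,t'_{h+n})$. Since $(U,V)$ is a \NO-instance, the pair $\u_i,\v_{h+i}$ is non-orthogonal for every $i\in\{0,\dots,n-1\}$, and Lemma~\ref{lem:cutVec} supplies a cut of width~$3$ between $P_{\u_i}$ and $Q_{\v_{h+i}}$ within the corresponding subrectangle of the free space. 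The top row of Figure~\ref{fig:cuts} shows how the extremal pieces in $P_\mathit{enter}$ and $\rev{P_\mathit{enter}}$ (types~\enumi{b} and~\enumi{c}) and the internal junctions through $P_\mathit{sep}$ (type~\enumi{a}) are laid against $Q_\mathit{sep}$, using only elementary pieces of types~\enumi{1}--\enumi{4}.

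Next I would handle the reset piece~\enumi{d}: for $h\in\{0,\dots,m-2\}$, a cut of width~$3$ from $(|P|-1,t'_{h+n})$ back to $(2,s'_{h+1})$. Only a single vector gadget $Q_{\v_{h+n}}$ lies in the relevant $Q$-strip, and the bottom row of Figure~\ref{fig:cuts} shows how the elementary pieces weave past it. Here nontriviality is used: the central vertical barrier is an elementary piece of type~\enumi{1} at a vertex of $Q$ of height~$11$, which exists only if $\v_{h+n}$ contains a coordinate equal to~$1$. The forward and reset cuts alternate correctly because each reset ends at~$(2,s'_{h+1})$, exactly where the next forward cut starts.

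Concatenating the forward cut for $h=0$, the reset $h=0\to 1$, the forward cut for $h=1$, and so on up to the forward cut for $h=m-1$ yields a single cut of width~$3$ from $(2,s'_0)=(2,1)$ to $(|P|-1,t'_{n+m-1})=(|P|-1,|Q|)$; the required inequalities $\gamma_{1,i}(1)\leq\gamma_{1,i+1}(0)$ and $\gamma_{2,i}(1)\geq\gamma_{2,i+1}(0)$ at every join hold by construction, since consecutive pieces within a cut share an endpoint while the reset cut itself, taken as a single piece-sequence of type~\enumi{d}, is designed so that its constituent pieces are already monotone in the right direction across their boundaries. The main obstacle is the bookkeeping: for every elementary piece of type~\enumi{3} or~\enumi{4} one must verify the $2\e=6$ separation between the relevant vertex values of $P^*,P^+,Q^*,Q^+$, and for every type~\enumi{1} or~\enumi{2} piece one must check that a whole subcurve sits at distance at least~$3$ from a given vertex. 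These reduce to routine case distinctions against the coordinates fixed at the start of Section~\ref{sec:partial}, and are made visually evident by Figure~\ref{fig:cuts}.
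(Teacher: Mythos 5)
Your proposal is correct and follows essentially the same approach as the paper: compose the per-gadget cuts of Lemma~\ref{lem:cutVec} using the connector cuts of types \enumi{a}--\enumi{d} and the elementary pieces of types \enumi{1}--\enumi{4}, with nontriviality invoked exactly where the paper does (to guarantee the central type-\enumi{1} barrier in the type-\enumi{d} reset exists). One small slip worth fixing: a type-\enumi{1} piece is anchored at a fixed vertex of $P$ (constant first coordinate), so the ``vertical barrier'' you describe sits at a $P$-vertex, not a $Q$-vertex; the height-$11$ vertex you mention lives on $Q_\v$ and is what the barrier must be far from, and it is its existence that requires the vector to contain a $1$.
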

	\begin{corollary}
		If~$(U,V)$ is a nontrivial \NO-instance of \OV{}, then~$\doF(P,Q)\geq 3$.\label{cor:doFno}
	\end{corollary}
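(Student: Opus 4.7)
The plan is to derive this corollary essentially as an immediate consequence of Lemma~\ref{lem:doFcut} combined with the cut-based lower bound for the partial Fr\'echet distance recalled in Section~\ref{sec:prelim}. All the combinatorial work is already contained in the lemma; only a short verification that the endpoints lie on the correct boundary is needed.

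Concretely, I would proceed as follows. First, given a nontrivial \NO-instance $(U,V)$, invoke Lemma~\ref{lem:doFcut} to obtain a cut $(\Gamma_1,\Gamma_2)$ of width~$3$ from $(2,1)$ to $(|P|-1,|Q|)$. Since $|P|\geq 2$, both indices $2$ and $|P|-1$ lie in $[1,|P|]$, so the cut starts on $[1,|P|]\times\{1\}$ and ends on $[1,|P|]\times\{|Q|\}$. Second, apply the fact recalled in Section~\ref{sec:prelim} (cf.~\cite{buchin2015computing}): whenever a cut of width~$\e$ starts on $[1,|P|]\times\{1\}$ and ends on $[1,|P|]\times\{|Q|\}$, the partial Fr\'echet distance from $P$ to $Q$ is at least~$\e$. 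Instantiated with $\e=3$, this yields $\doF(P,Q)\geq 3$.

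Since the corollary is a straightforward translation from the existence of an appropriate cut to a lower bound on $\doF$, there is no genuine obstacle within the corollary itself. The real difficulty lies in Lemma~\ref{lem:doFcut}, where one has to glue the elementary pieces \enumi{1}--\enumi{4} from Lemma~\ref{lem:stepCut} into the per-vector cuts of Lemma~\ref{lem:cutVec} and then chain them together through the four transition types \enumi{a}--\enumi{d}, crucially using that no vector is zero (by nontriviality) so that a type-\enumi{1} piece is always available in the middle of each $Q_{\v_k}$ gadget.
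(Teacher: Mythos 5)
Your proposal is correct and matches the paper's (implicit) proof: the corollary is stated without a written proof precisely because it follows immediately from Lemma~\ref{lem:doFcut} together with the cut-based lower-bound criterion for $\doF$ recalled in Section~\ref{sec:prelim}, exactly as you outline. Your added check that $(2,1)$ and $(|P|-1,|Q|)$ lie on the required boundary segments, and your remark locating the real work inside Lemma~\ref{lem:doFcut}, are both accurate.
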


	\noindent
	Combining Corollaries~\ref{cor:doFyes} and~\ref{cor:doFno}, we obtain Theorem~\ref{thm:doF}.

	\begin{theorem}
		For any polynomial restriction of~$1\leq |P|\leq|Q|$, the partial Fr\'echet distance from~$P$ to~$Q$ has no~$\bigO((|P||Q|)^{1-\delta})$ time~$(3-\e)$-approximation unless \SETH' fails.\label{thm:doF}
	\end{theorem}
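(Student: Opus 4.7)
The plan is to turn the~$1$-versus-$3$ gap of Corollaries~\ref{cor:doFyes} and~\ref{cor:doFno} into a~\SETH'-based conditional lower bound by a standard contrapositive argument. I would suppose, for contradiction, that for some fixed~$\delta,\e>0$ there is a~$(3-\e)$-approximation algorithm~$A$ for~$\doF$ running in time~$\bigO((|P||Q|)^{1-\delta})$ on inputs satisfying the prescribed polynomial restriction on~$1\leq|P|\leq|Q|$, and then use~$A$ to solve nontrivial~\uOV{} in a running time ruled out by Corollary~\ref{cor:uOV}.

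Given a nontrivial~\uOV{} instance~$(U,V)$, I would first relabel so that~$n=|U|\leq|V|=m$ (which is cost-free since~\uOV{} is symmetric in~$U$ and~$V$), then apply the construction of this section in~$\bigO((n+m)d)$ time to obtain curves~$P,Q$ with~$|P|=\Theta(nd)$ and~$|Q|=\Theta((n+m)d)=\Theta(md)$, and finally invoke~$A$ on~$(P,Q)$. By Corollary~\ref{cor:doFyes}, the output is at most~$(3-\e)\cdot 1<3$ in the \YES{} case, and by Corollary~\ref{cor:doFno} it is at least~$3$ in the \NO{} case, so thresholding the returned value against~$3-\e$ decides~\uOV{}. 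The total running time is~$\bigO((n+m)d+(nm)^{1-\delta}d^{\bigO(1)})=\bigO((nm)^{1-\delta}d^{\bigO(1)})$, using~$|P||Q|=\Theta(nm\,d^{2})$ and the fact that the~$A$-call dominates the construction overhead; this contradicts Corollary~\ref{cor:uOV}, and hence~\SETH'.

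The subtlety I expect to require the most care is the clause ``for any polynomial restriction of~$1\leq|P|\leq|Q|$''. The reduction fixes the size ratio~$|P|/|Q|=\Theta(n/(n+m))$, which may not natively realise a prescribed polynomial relationship such as~$|Q|=\Theta(|P|^{\alpha})$ with~$\alpha>1$. I would address this either by choosing~$n$ and~$m$ in the fed~\uOV{} instance as appropriate functions of the target sizes, or, when a fixed shape must be matched, by appending matched padding blocks to~$P$ and~$Q$: padding~$Q$ is essentially free because~$\doF$ already ranges over arbitrary subcurves of~$Q$, while padding~$P$ requires only a short extension that can be matched with width~$1$ in the \YES{} case while still admitting a cut of width~$3$ in the \NO{} case, preserving the~$1$-versus-$3$ gap.
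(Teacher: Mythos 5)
Your proposal is correct and takes essentially the same route as the paper: the paper's proof of Theorem~\ref{thm:doF} is simply the combination of Corollaries~\ref{cor:doFyes} and~\ref{cor:doFno} with the observation at the start of Section~\ref{sec:partial} that the construction yields~$|P|=\Theta(nd)$ and~$|Q|=\Theta((n+m)d)$ in~$\bigO((n+m)d)$ time, so a fast approximator would contradict Corollary~\ref{cor:uOV}. For the polynomial-restriction clause, the paper relies precisely on your first option (choosing~$n$ and~$m$ appropriately, exploiting that~\uOV{} is hard for every polynomial relation between~$n$ and~$m$); the padding alternative is unnecessary and, as stated, slightly hand-wavy, since careless padding of~$Q$ could lower~$\doF$ in the~\NO{} case because~$\doF$ minimizes over subcurves of~$Q$.
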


\section{Fr\'echet distance}\label{sec:frechet}
	We use~$P$ and~$Q$ to construct two curves~$P'$ and~$Q'$ of size~$\bigO((n+m)d)$ as follows.
	\begin{align*}
		P_\mathit{skip1}  &= \C{6,4,6}\circ P^+ \circ\C{6,4,6}\\
		P_\mathit{skip2}  &= P^+\circ P^+\circ P^+\circ\C{2}\circ P^+\circ P^*\circ P^+\circ\C{2}\circ P^+\\
		P_\mathit{start}  &= \C{6}\circ (m-1)\cdot P_\mathit{skip1}\circ m\cdot P_\mathit{skip2}\\
		P'                &= P_\mathit{start}\circ P\circ\rev{P_\mathit{start}}\\[2ex]
		Q_\mathit{skip1}  &= Q^+ = \C{5,7,5}\\
		Q_\mathit{skip2}  &= Q^+\circ Q^+\circ \C{7,3,7}\circ Q^*\circ\C{7,3,7}\\
		Q_\mathit{skip3}  &= d\cdot\C{11,3}\circ\C{1}\\
		Q_\mathit{start}  &= (m-1)\cdot Q_\mathit{skip1}\circ m\cdot Q_\mathit{skip2} \circ Q_\mathit{skip3}\\
		Q'                &= Q_\mathit{start}\circ Q\circ\rev{Q_\mathit{start}}\text{.}
	\end{align*}

	\begin{figure}[b!]\centering%
		\includegraphics{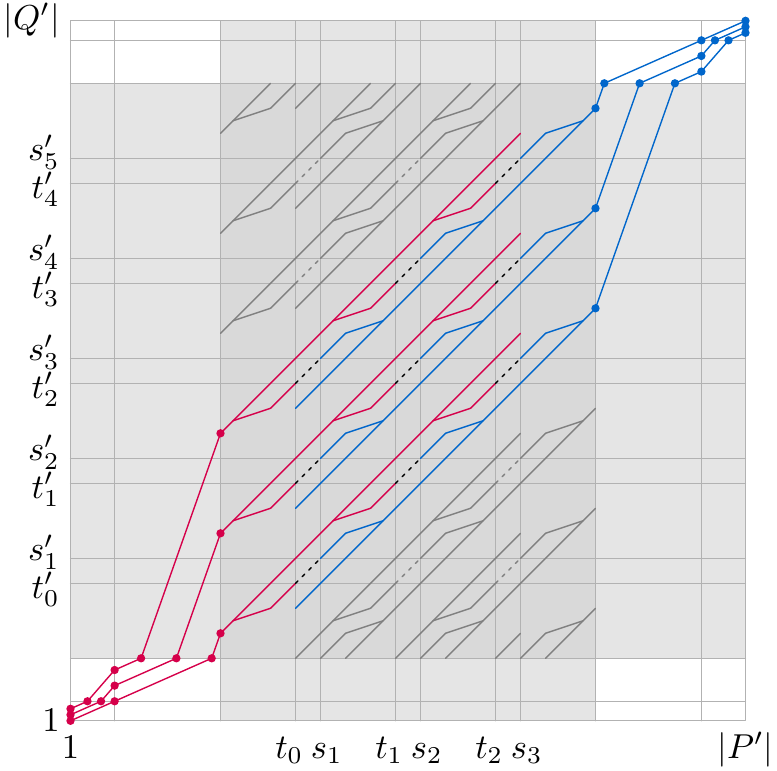}%
		\caption{Relevant paths in the free space with~$n=m=3$. The free space of~$P$ and~$Q$ shaded.\label{fig:sketch}}%
	\end{figure}

	\noindent
	We show that~$\dF(P',Q')\leq 1$ if the nontrivial instance~$(U,V)$ is a \YES-instance, and that~$\dF(P',Q')\geq 3$ otherwise (see also Figures~\ref{fig:fdSat} and~\ref{fig:fdUnsat}).
	Hence, a~$\bigO((|P|+|Q|)^{2-\delta})$ time~$(3-\e)$-approximation algorithm (with~$\e,\delta>0$) for the Fr\'echet distance violates \SETH'.

	Consider a nontrivial \YES-instance.
	Let~$l'_h$ and~$r'_h$ be the indices in~$Q'$ of respectively the~$l_h$-th and~$r_h$-th vertices of the copy of~$Q$ in~$Q'$.
	For each~$h\in\{0,\dots,m-1\}$, we construct a matching of width~$1$ between~$P_\mathit{start}$ and~$Q'[1,l'_h]$, and between~$\rev{P_\mathit{start}}$ and~$Q'[r'_{h+n},|Q'|]$, see Figure~\ref{fig:sketch}.
	It then follows from Corollary~\ref{lem:doFmatch} that~$\dF(P',Q')\leq 1$.

	We construct the matching between~$P_\mathit{start}$ and~$Q'[1,l'_h]$, the other case is symmetric.
	Match~$\C{6}$ with the first~$h$ copies of~$Q_\mathit{skip1}$.
	Match~$m-1-h$ copies of~$P_\mathit{skip1}$ with the remaining~$m-1-h$ copies of~$Q_\mathit{skip1}$.
	Match the remaining~$h$ copies of~$P_\mathit{skip1}$ with~$h$ copies of~$Q_\mathit{skip2}$.
	Match~$m-h-1$ copies of~$P_\mathit{skip2}$ with~$m-h-1$ copies of~$Q_\mathit{skip2}$.
	Match the next copy of~$P_\mathit{skip2}$ with the remainder~$Q_\mathit{skip2}\circ Q_\mathit{skip3}$ of~$Q_\mathit{start}$.
	Finally, match the remainder~$P^+\circ\C{2}\circ P^+\circ h\cdot P_\mathit{skip2}$ of~$P_\mathit{start}$ with~$Q[1,l_h]$.
	
	\begin{corollary}
		If~$(U,V)$ is a nontrivial \YES-instance of \OV{}, then~$\dF(P',Q')\leq 1$.\label{cor:dFyes}
	\end{corollary}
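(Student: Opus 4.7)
The plan is to decompose $P'$ and $Q'$ into three consecutive blocks, construct a matching of width~$1$ on each block, and concatenate. Writing $P'=P_\mathit{start}\circ P\circ\rev{P_\mathit{start}}$, I would pick the index $h\in\{0,\dots,m-1\}$ supplied by Lemma~\ref{lem:doFmatch} and split $Q'$ at the positions $l'_h$ and $r'_{h+n}$. Since by definition $l'_h$ and $r'_{h+n}$ are the positions in $Q'$ of the vertices $l_h$ and $r_{h+n}$ of the embedded copy of~$Q$, the middle block $Q'[l'_h,r'_{h+n}]$ coincides with $Q[l_h,r_{h+n}]$, so Lemma~\ref{lem:doFmatch} already yields a width-$1$ matching of $P$ against this block.

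For the prefix block $(P_\mathit{start},Q'[1,l'_h])$, I would verify that the explicit matching recipe given in the paragraph preceding the statement achieves width~$1$ by checking each constituent pairing. Each pairing reduces either to the atomic width-$1$ matchings $(P^+,Q^+)$, $(P^+,Q^*)$, $(P^*,Q^*)$, and $(P_\u,\C{1}\circ Q^*\circ\C{1})$ already established in Section~\ref{sec:partial}, or to synchronous matchings between curves of identical height profiles (for example, one copy of $P_\mathit{skip1}$ against another, or one copy of $P_\mathit{skip2}$ against another after suitable vertical alignment of their $P^*/Q^*$-cores). The final piece $P^+\circ\C{2}\circ P^+\circ h\cdot P_\mathit{skip2}$ then feeds into $Q[1,l_h]$, which is the prefix of $Q$ consisting of $Q_\mathit{sep}$ followed by $h$ gadget pairs $Q_{\v_k}\circ Q_\mathit{sep}$, and can be absorbed into the $P_\mathit{enter}$ analysis already implicit in Lemma~\ref{lem:paths}. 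By the reversal symmetry of both constructions, the same argument produces a width-$1$ matching between $\rev{P_\mathit{start}}$ and $Q'[r'_{h+n},|Q'|]$, and concatenating the three block matchings gives the desired width-$1$ matching of $P'$ and $Q'$, hence $\dF(P',Q')\leq 1$.

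The main obstacle is the single pairing that leaves the usual height band $[3,8]$: one copy of $P_\mathit{skip2}$ has to be matched against the trailing $Q_\mathit{skip2}\circ Q_\mathit{skip3}$ of $Q_\mathit{start}$, and $Q_\mathit{skip3}=d\cdot\C{11,3}\circ\C{1}$ oscillates up to height~$11$ and descends to height~$1$. I would exhibit this sub-matching explicitly in the $1$-free space, in the spirit of Figure~\ref{fig:fork}, by aligning the $d\cdot\C{11,3}$ teeth of $Q_\mathit{skip3}$ with the $d\cdot\C{10,4}$ teeth of the $P^*$ inside $P_\mathit{skip2}$ at pointwise distance~$1$, matching the $Q^+$ and $Q^*$ pieces of the preceding $Q_\mathit{skip2}$ against the $P^+$-blocks surrounding this $P^*$, and letting the terminal $\C{1}$ of $Q_\mathit{skip3}$ meet the trailing $\C{2}$ inside $P_\mathit{skip2}$ before control passes into the initial $\C{1}$ of $Q_\mathit{sep}$ at the start of $Q$.
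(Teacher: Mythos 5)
Your proposal follows essentially the same route as the paper: split $Q'$ at $l'_h$ and $r'_{h+n}$, feed $P$ against the middle block via Lemma~\ref{lem:doFmatch}, and pair $P_\mathit{start}$ (resp.\ $\rev{P_\mathit{start}}$) with the outer blocks via the stated recipe and reversal symmetry. You also correctly isolate the only pairing that leaves the height band~$[3,8]$, namely one copy of $P_\mathit{skip2}$ against $Q_\mathit{skip2}\circ Q_\mathit{skip3}$, and the alignment of the $d\cdot\C{11,3}$ teeth of $Q_\mathit{skip3}$ with the $d\cdot\C{10,4}$ teeth of $P^*$ at pointwise distance~$1$ is the right picture.

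One claim, however, is imprecise: the final pairing of $P^+\circ\C{2}\circ P^+\circ h\cdot P_\mathit{skip2}$ with $Q[1,l_h]$ is not covered by a ``$P_\mathit{enter}$ analysis implicit in Lemma~\ref{lem:paths}.'' That lemma matches $P[1,a_i]$ --- which literally is $P_\mathit{enter}$ followed by $P_{\u}$ and $P_\mathit{sep}$ gadgets --- against $Q[l_h,c_{h+i}]$, a subcurve of $Q$ \emph{starting} at $l_h$. Here you instead need to match a suffix of $P_\mathit{start}$ (built from $P_\mathit{skip2}$, hence from $P^+$ blocks and one $P^*$, not from $P_\mathit{enter}$'s leading $(d+1)\cdot\C{4,10}$ teeth) against the prefix $Q[1,l_h]$ that \emph{ends} at $l_h$. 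The required width-$1$ matching does exist --- each $P_\mathit{skip2}$ contains a $P^*$ that absorbs one $Q_{\v_k}$ via the established $(P^*,Q_{\v})$ matching while the surrounding $P^+$ and $\C{2}$ pieces pair with the $Q^*$ and $\C{1}$ pieces of $Q_\mathit{sep}$ --- but it must be assembled directly from the base gadget matchings rather than read off from Lemma~\ref{lem:paths}.
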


	\noindent	
	Now consider a nontrivial \NO-instance.
	Let~$a$ be the index in~$P_\mathit{start}$ of the last vertex at position~$10$ of the first occurrence of~$P_\mathit{skip2}$.
	We construct a cut of width~$3$ from a point on~$(a,1)$ to the start of the copy of the cut given by Corollary~\ref{lem:doFcut}.
	Similarly, we can construct a cut of width~$3$ from the end of that cut to~$(|P'|-a+1,|Q'|)$.
	We show how to construct the first cut, the other cut is symmetric.

	Let~$a'$ be the index in~$P_\mathit{start}$ of the last vertex (at position~$2$) of~$P^*$ of the last occurrence of~$P_\mathit{skip2}$.
	Consider the last two vertices of~$P^*$, namely those at positions~$10$ and~$2$, respectively.
	Any point on~$Q_\mathit{skip1}$ has distance at least~$3$ to the vertex at position~$10$.
	Similarly, for~$Q_\mathit{skip2}$ all vertices except the interior vertices of~$Q^*$ have distance at least~$3$ to the vertex at position~$10$.
	The interior vertices of~$Q^*$ have distance at least~$3$ to the vertex of~$P^*$ at position~$2$.
	Let~$b'$ be index in~$Q_\mathit{start}$ of the last vertex of~$Q^*$ at position~$9$ in the last occurrence of~$Q_\mathit{skip2}$.
	We obtain a cut of width~$3$ from~$(a,1)$ to~$(a',b')$.
	Let~$b"=b'+1$ and let~$a"$ be the index in~$P'$ of the second vertex (at location~$10$) of~$P$.
	There is a type~\enumi{3} cut of width~$3$ from~$(a',b')$ to~$(a",b")$.
	Finally, we construct a cut of width~$3$ between~$d\cdot\C{4,10}$ and~$\C{3,7,3,7}\circ Q_\mathit{skip3}$.
	The cut starts at~$(a",b")$ and uses a cut of type~\enumi{1} followed by~$d$ cuts of type~\enumi{3} and one cut of type~\enumi{2} to reach the start of the cut given by Corollary~\ref{lem:doFcut}.
	
	\begin{corollary}
		If~$(U,V)$ is a nontrivial \NO-instance of \OV{}, then~$\dF(P',Q')\geq 3$.\label{cor:dFno}
	\end{corollary}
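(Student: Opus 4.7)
The plan is to show that when $(U,V)$ is a nontrivial \NO-instance, the free space diagram of $P'$ and $Q'$ admits a cut of width~$3$ with endpoints on the boundary pieces required by the criterion recalled in Section~\ref{sec:prelim}: starting on~$[1,|P'|]\times\{1\}$ or~$\{|P'|\}\times[1,|Q'|]$ and ending on~$[1,|P'|]\times\{|Q'|\}$ or~$\{1\}\times[1,|Q'|]$. My strategy is to sandwich the cut from Lemma~\ref{lem:doFcut} between two symmetric extensions that traverse the~$P_{\mathit{start}}\times Q_{\mathit{start}}$ and~$\rev{P_{\mathit{start}}}\times\rev{Q_{\mathit{start}}}$ regions of the diagram.

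First I would note that since~$P$ and~$Q$ appear as subcurves of~$P'$ and~$Q'$, the cut furnished by Lemma~\ref{lem:doFcut} embeds directly into the free space of~$P'$ and~$Q'$ as a cut of width~$3$ between the two embedded copies of~$(2,1)$ and~$(|P|-1,|Q|)$. It then suffices to construct a prefix cut from a point on the bottom boundary of the whole diagram to that embedded copy of~$(2,1)$, and symmetrically a suffix cut from the embedded copy of~$(|P|-1,|Q|)$ to a point on the top boundary. Following the outline preceding the corollary, the prefix is obtained as a concatenation of elementary cuts: a type-\enumi{2} segment running down the vertical line~$x=a$ through the interior of~$Q_{\mathit{start}}$ (where~$a$ is the specified index of the last position-$10$ vertex in the first~$P_{\mathit{skip2}}$), a type-\enumi{3} step to the index~$a'$ of the last position-$2$ vertex in the last~$P_{\mathit{skip2}}$, and finally a short composite through the~$Q_{\mathit{skip3}}=d\cdot\C{11,3}\circ\C{1}$ block using one type-\enumi{1} piece, $d$ type-\enumi{3} pieces, and one type-\enumi{2} piece to land at the entry point of the Lemma~\ref{lem:doFcut} cut.

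The core geometric checks are the width-$3$ forbidden-region inequalities. Every point of~$Q_{\mathit{skip1}}$ lies in~$[5,7]$, so has distance at least~$3$ to every position-$10$ vertex on~$P^+$ and~$P^*$; the non-interior vertices of~$Q^*$ inside~$Q_{\mathit{skip2}}$ take values in~$\{3,7\}$, giving the same distance bound, while the interior vertices at positions~$5$ and~$9$ have distance at least~$3$ to the position-$2$ vertex of~$P^*$; and the large-amplitude oscillation~$11,3$ in~$Q_{\mathit{skip3}}$ stays at distance at least~$3$ from both position-$4$ and position-$10$ vertices of~$P$, which is precisely what enables the stepwise descent. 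Bimonotonicity across consecutive pieces is immediate from the left-to-right, top-to-bottom order in which they are constructed.

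The main obstacle is bookkeeping rather than ingenuity: one must check that each elementary piece satisfies the preconditions of types~\enumi{1}--\enumi{4} from Section~\ref{sec:partial}, verify that the prefix and suffix indeed reach the correct embedded coordinates, and confirm the monotonicity conditions $\gamma_{1,i}(1)\leq\gamma_{1,i+1}(0)$ and $\gamma_{2,i}(1)\geq\gamma_{2,i+1}(0)$ at each junction. No new techniques beyond those used for Lemma~\ref{lem:doFcut} are needed; the gadget values were chosen precisely so that these extensions realize the schematic picture of Figure~\ref{fig:sketch}, yielding a single cut of width~$3$ from the bottom boundary to the top boundary of the free space of~$P'$ and~$Q'$, and hence $\dF(P',Q')\geq 3$.
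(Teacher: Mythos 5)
Your overall strategy — embed the cut from Lemma~\ref{lem:doFcut}, then build a prefix cut through $P_\mathit{start}\times Q_\mathit{start}$ and a symmetric suffix cut — matches the paper. You also correctly identify all of the relevant distance checks ($Q_\mathit{skip1}$ at $\{5,7\}$ away from a position-$10$ vertex, the split in $Q_\mathit{skip2}$ between non-interior and interior vertices of~$Q^*$, and the large oscillation in~$Q_\mathit{skip3}$). However, the prefix cut as you describe it does not exist, and this is a real gap, not just loose phrasing. A ``segment running down the vertical line $x=a$ through the interior of $Q_\mathit{start}$'' (incidentally a type-\enumi{1} piece, not type-\enumi{2}; your labels are swapped) would have to pass every one of the $m$ copies of~$Q^*$ embedded in the copies of~$Q_\mathit{skip2}$, and each $Q^*$ contains interior vertices at position~$9$, which sit at distance~$1<3$ from the position-$10$ vertex~$P'(a)$. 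So the vertical piece is blocked long before it clears $Q_\mathit{start}$, and a single subsequent type-\enumi{3} step from~$a$ to~$a'$ (which advances~$j$ by only one) cannot recover this.

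What you are missing is the reason the paper defines~$a$ inside the \emph{first} occurrence of~$P_\mathit{skip2}$ and~$a'$ inside the \emph{last} one: the cut from~$(a,1)$ to~$(a',b')$ must zigzag across all $m$ copies of~$P_\mathit{skip2}$ in~$P_\mathit{start}$, consuming one copy per copy of~$Q_\mathit{skip2}$ in~$Q_\mathit{start}$. Within each stage it rides a position-$10$ vertex of some~$P^*$ vertically past the $\{5,7\}$ and $\{3,7\}$ portions of~$Q$, steps over to the adjacent position-$2$ vertex when it meets the interior of~$Q^*$ (both $5$ and~$9$ are at distance~$\geq 3$ from~$2$), and then advances to the position-$10$ vertex of the next~$P^*$ before the next $Q^*$ arrives. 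Only after that zigzag reaches $(a',b')$ does the paper take a single type-\enumi{3} step into~$P$ (to the index~$a''$ it introduces) and then execute the short composite through $Q_\mathit{skip3}$ you describe. Your account collapses the zigzag and the transition step into ``one vertical plus one type-\enumi{3}'' and drops the intermediate point~$a''$, so the cut you specify is not a valid cut of width~$3$; a correct proof needs the staircase structure that the definitions of~$a$ and~$a'$ were set up to support.
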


	\noindent
	Theorem~\ref{thm:dF} follows from Corollaries~\ref{cor:dFyes} and~\ref{cor:dFno}.
	\begin{theorem}
		The Fr\'echet distance between one-dimensional curves~$P$ and~$Q$ has no $\bigO((|P|+|Q|)^{2-\delta})$ time~$(3-\e)$-approximation unless \SETH' fails.\label{thm:dF}
	\end{theorem}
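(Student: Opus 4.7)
The plan is to combine Corollaries~\ref{cor:dFyes} and~\ref{cor:dFno} with the \OV{} hardness stated in Corollary~\ref{cor:OV}. First I would fix a nontrivial instance $(U,V)$ of~\OV{} with $|U|=|V|=n$ and dimension $d$, and then apply the reduction described earlier in this section (taking $m=n$) to produce the curves $P'$ and $Q'$. This construction is evidently computable in $\bigO(nd)$ time.

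Next I would verify that both $|P'|$ and $|Q'|$ lie in $\Theta(nd)$, which matters because the bound in the theorem, namely $(|P|+|Q|)^{2-\delta}$, is symmetric (in contrast to the asymmetric $(|P||Q|)^{1-\delta}$ of Theorem~\ref{thm:doF}). A direct vertex count gives: $P_\mathit{start}$ has $\Theta(md)$ vertices (from $m-1$ copies of $P_\mathit{skip1}$ and $m$ copies of $P_\mathit{skip2}$), $Q_\mathit{start}$ has $\Theta(md)$ vertices, and the inner curves $P$ and $Q$ from Section~\ref{sec:partial} have $\Theta(nd)$ and $\Theta((n+m)d)$ vertices respectively. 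Under $n=m$ all four totals collapse to $\Theta(nd)$, so $|P'|+|Q'|\in\Theta(nd)$.

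Then by Corollaries~\ref{cor:dFyes} and~\ref{cor:dFno} we have $\dF(P',Q')\leq 1$ when $(U,V)$ is a \YES-instance and $\dF(P',Q')\geq 3$ when it is a \NO-instance, so for any $\e>0$ a $(3-\e)$-approximation distinguishes the two cases. If such an approximation ran in $\bigO((|P|+|Q|)^{2-\delta})$ time for some $\delta>0$, then \OV{} on $(U,V)$ could be decided in $\bigO(nd+(nd)^{2-\delta}) = \bigO(n^{2-\delta}d^{\bigO(1)})$ time, which by Corollary~\ref{cor:OV} would imply that \SETH' fails.

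Since everything needed is already assembled by the two corollaries, I do not anticipate a substantive obstacle. The only point that needs some care is the size count showing that $|P'|$ and $|Q'|$ are balanced at $\Theta(nd)$; if this balance failed, the symmetric $(|P|+|Q|)^{2-\delta}$ bound would not translate cleanly into a subquadratic algorithm for balanced \OV{}.
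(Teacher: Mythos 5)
Your proof is correct and matches the paper's approach: the paper's entire proof is the single sentence "Theorem~\ref{thm:dF} follows from Corollaries~\ref{cor:dFyes} and~\ref{cor:dFno}," and you have simply filled in the routine details — taking $m=n$ so that the reduction is from~\OV{}, verifying $|P'|,|Q'|\in\Theta(nd)$ so that the symmetric bound $(|P|+|Q|)^{2-\delta}$ translates to $\bigO(n^{2-\delta}d^{\bigO(1)})$, and invoking Corollary~\ref{cor:OV}.
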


\section{Discrete Fr\'echet distance}\label{sec:discrete}
	The previous constructions can easily be adapted to show that the discrete Fr\'echet distance cannot be approximated better than a factor~$3$ in strongly subquadratic time.
	We adapt the constructed curves by introducing a constant number of vertices along each edge.

	Higher-dimensional curves~$P'$ and~$Q'$ generally have~$\bigO(|P'|^2|Q'|+|P'||Q'|^2)$ critical values.
	The Fr\'echet distance between~$P'$ and~$Q'$ is always one of the critical values~\cite{altgodau}.
	In contrast to curves in higher dimensions, where a critical value can depend on three vertices, critical values for curves in one dimension depend only on two vertices.
	In particular, for curves in one dimension, a critical value is either half the distance between two vertices of the same curve, or the distance between two vertices of different curves.
	If there are only~$c$ distinct coordinates, this means there are~$\bigO(c^2)$ critical values.
	Lemma~\ref{lem:critical} transforms curves into curves that are~$\bigO(c^2)$ times as large, such that their discrete Fr\'echet distance is the Fr\'echet distance of the original curves.
	The curves in our construction have only a constant number of distinct coordinates, leading to Corollaries~\ref{cor:discrete} and~\ref{cor:discretePartial}.

	\begin{lemma}
		For continuous one-dimensional curves~$P$ and~$Q$ with~$c$ distinct coordinates, there are curves~$P'$ and~$Q'$ of sizes~$\bigO(c^2|P|)$ and~$\bigO(c^2|Q|)$ with~$\dF(P,Q)=\ddF(P',Q')$.\label{lem:critical}
	\end{lemma}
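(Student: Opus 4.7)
The plan is to refine $P$ and $Q$ at a carefully chosen set of $\bigO(c^2)$ coordinates, so that every continuous Fr\'echet matching of the refined curves can be discretized without loss. I would first invoke the Alt--Godau classification of critical Fr\'echet values and specialize it to one dimension, which, as noted in the preceding paragraph, reduces each critical value to one of the form $|p_i-q_j|$, $|p_i-p_j|/2$, or $|q_i-q_j|/2$. With $c$ distinct vertex coordinates there are therefore at most $\bigO(c^2)$ candidate values for $\dF(P,Q)$.

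Next I construct the refinements. Let $R_P\subset\R$ collect every coordinate at which one of these critical events could shift the free-space boundary along an edge of $P$; concretely these are the vertex coordinates $\{p_i\}\cup\{q_j\}$ together with the $\bigO(c^2)$ midpoints and mirror-image coordinates of pairs of vertices. Define $P'$ by inserting into each edge of $P$ a vertex at every point of $R_P$ that lies strictly inside the edge, and define $Q'$ analogously from a symmetric set $R_Q$. Then $|P'|=\bigO(c^2|P|)$ and $|Q'|=\bigO(c^2|Q|)$, and since subdivision leaves the underlying continuous curve unchanged, $\dF(P',Q')=\dF(P,Q)$. The inequality $\dF(P',Q')\leq\ddF(P',Q')$ is standard: the linear interpolation of a discrete matching between consecutive vertex pairs yields a continuous matching of equal or smaller width, using linearity of $P'$ and $Q'$ on their edges and convexity of $|\cdot|$ on $\R$.

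To establish $\ddF(P',Q')\leq\dF(P',Q')$, let $\e=\dF(P',Q')$ and view an optimal continuous $\e$-matching as a bimonotone path $\pi$ in the $\e$-free space on the grid $[1,|P'|]\times[1,|Q'|]$. I would convert $\pi$ into a monotone staircase through grid corners, cell by cell. The key local claim is that on each side of any grid cell, $|P'(\cdot)-Q'(\cdot)|$ is the absolute value of a linear function, and the construction of $R_P$ and $R_Q$ ensures that every coordinate at which this function equals $\e$ has already been promoted to a grid vertex. Hence the $\e$-free space intersected with any cell side is an interval whose endpoints are corners; whenever $\pi$ crosses a side, a corner of that side lies in the $\e$-free space, and one can reroute $\pi$ through corners.

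The main obstacle I anticipate is the cell-by-cell routing: given an arbitrary bimonotone entry and exit of $\pi$ through the boundary of a cell, one must produce a monotone staircase through corners of that cell that stays entirely in the $\e$-free space. The delicate case is a type-(c) critical event, where the $\e$-free space in a cell can pinch to a single point (a midpoint coordinate of two vertices); the staircase is then forced to route through exactly that coordinate, which is precisely why the midpoint coordinates are placed into $R_P$ and $R_Q$. I would dispatch the case analysis by combining the convexity of the $\e$-free space within a cell (the intersection of the slab $|P'(x)-Q'(y)|\leq\e$ with a rectangle) with the observation that, after refinement, every extremum of $|P'(\cdot)-Q'(\cdot)|$ along a cell side occurs at a corner of that side.
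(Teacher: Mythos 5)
Your proposal follows the same two-step structure as the paper's proof: refine both curves at a set of coordinates derived from the $\bigO(c^2)$ critical values of the one-dimensional Fr\'echet distance, then show that the discrete Fr\'echet distance of the refined curves coincides with the continuous one. You add more detail on the ``rounding'' step than the paper, which simply asserts that for a continuous matching of width~$\e$, the discrete Fr\'echet distance of the refinements is at most~$\e$; your cell-by-cell staircase argument is a reasonable attempt to fill that in.

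However, there is a genuine gap in the way you instantiate the refinement set, and it undermines your own ``key local claim.'' The paper subdivides at $X_\e=\{v\pm\e : v \text{ a vertex coordinate}\}$, unioned over critical~$\e$. This set is \emph{self-consistent} for your claim in the following sense: if $p'$ is a vertex of the refined $P$, then either $p'$ is an original vertex (so $p'\pm\e\in X_\e$) or $p'=v\pm\e$ for a vertex~$v$ (so $p'\mp\e=v$ is a vertex); in both cases one of $p'\pm\e$ is a vertex of the refined $Q$ whenever it lies in $Q$'s range, and so the free-space boundary along each cell side does terminate at grid lines. Your set $R$ consists of vertex coordinates plus \emph{midpoints and mirror images of pairs of vertices}, which is a strictly smaller, $\bigO(c^2)$-sized set. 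This set does \emph{not} contain coordinates of the form $q_j\pm|p_a-p_b|/2$ for three \emph{distinct} vertices $q_j,p_a,p_b$ --- exactly the distance-to-vertex coordinates that arise when $\e$ is a type-(c) (monotonicity) critical value --- and such a coordinate is generically neither a vertex, a midpoint of two vertices, nor a mirror image of two vertices. Consequently, the claim ``every coordinate at which this function equals $\e$ has already been promoted to a grid vertex'' can fail for your $R$, so the staircase rerouting is not justified. You would need to enlarge $R$ to something like the paper's $X$ (and note that, taken literally, $X$ has size $\bigO(c^3)$, not $\bigO(c^2)$ as both you and the paper claim --- harmless here because the construction has $c=\bigO(1)$, but worth flagging). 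A secondary imprecision: your closing observation that ``every extremum of $|P'(\cdot)-Q'(\cdot)|$ along a cell side occurs at a corner'' is false for the minimum, since $|P'(x)-q'_j|$ is V-shaped and its minimum can lie in the interior of a side; it is only the level-$\e$ crossings, not the extrema, that the refinement aligns to corners.
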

	\begin{proof}
		Let~$X_\e$ be the set of~$\bigO(c)$ coordinates that lie at distance~$\e$ from a vertex of~$P$ or~$Q$.
		Let~$P_X$ and~$Q_X$ be copies of~$P$ and~$Q$ for which each edge is subdivided by introducing vertices at the points of~$X$ on that edge.
		The curves~$P_X$ and~$Q_X$ have~$\bigO(|X||P|)$ and~$\bigO(|X||Q|)$ vertices respectively.
		Consider a matching between~$P$ and~$Q$ of width~$\e$.
		Then the discrete Fr\'echet distance between~$P_{X_\e}$ and~$Q_{X_\e}$ is at most~$\e$.
		Let~$X$ be the union of~$X_\e$ for all critical values~$\e$.
		Then~$|X|=\bigO(c^2)$.
		Consider the curves~$P_X$ and~$Q_X$ of size~$\bigO(c^2|P|)$ and~$\bigO(c^2|Q|)$ respectively.
		Then the discrete Fr\'echet distance between~$P_X$ and~$Q_X$ is at most the Fr\'echet distance between~$P$ and~$Q$.
		Since the Fr\'echet distance is a lower bound for the discrete Fr\'echet distance, we have~$\dF(P,Q)=\ddF(P_X,Q_X)$.
	\end{proof}

	\begin{remark}
		In our construction, we can say something more: because the vertices of~$P'$ all have odd coordinates and the vertices of~$Q'$ all have even coordinates, the critical values are all integer.
		Moreover, since all vertices lie in the range~$[0,11]$, the critical values of~$P'$ and~$Q'$ are integers between~$0$ and~$11$.
	\end{remark}

	\begin{corollary}
		The discrete Fr\'echet distance between one-dimensional curves~$P$ and~$Q$ has no~$\bigO((|P|+|Q|)^{2-\delta})$ time~$(3-\e)$-approximation unless \SETH' fails.\label{cor:discrete}
	\end{corollary}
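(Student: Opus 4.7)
The plan is to derive Corollary~\ref{cor:discrete} as a direct consequence of Theorem~\ref{thm:dF} together with Lemma~\ref{lem:critical}. The key observation to exploit is that the one-dimensional curves $P'$ and $Q'$ constructed in Section~\ref{sec:frechet} use only a constant number of distinct coordinate values. Indeed, every vertex of every gadget ($P_{\u}$, $Q_{\v}$, $P^*$, $Q^*$, $P^+$, $Q^+$, and the various $P_\mathit{sep}$, $Q_\mathit{sep}$, $P_\mathit{skip}$, $Q_\mathit{skip}$, $P_\mathit{enter}$, $Q_\mathit{skip3}$ pieces) lies at an integer coordinate in $\{0,1,\dots,11\}$, so $c = \bigO(1)$ throughout the reduction.

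Given a nontrivial \uOV{} instance $(U,V)$, first build the one-dimensional curves $P'$ and $Q'$ of Section~\ref{sec:frechet} in $\bigO((n+m)d)$ time. Then apply Lemma~\ref{lem:critical} to obtain curves $P''$ and $Q''$ of sizes $\bigO(c^2 |P'|) = \bigO(|P'|)$ and $\bigO(c^2 |Q'|) = \bigO(|Q'|)$, satisfying $\ddF(P'',Q'') = \dF(P',Q')$. Combining this identity with Corollaries~\ref{cor:dFyes} and~\ref{cor:dFno} yields $\ddF(P'',Q'') \leq 1$ in the \YES{} case and $\ddF(P'',Q'') \geq 3$ in the \NO{} case, so any factor $(3-\e)$-approximation of $\ddF(P'',Q'')$ decides \uOV{} (and in particular \OV{}, by taking $n=m$).

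The reduction is completed by a standard running-time calculation. Since $|P''|+|Q''| = \Theta(|P'|+|Q'|) = \Theta((n+m)d)$, a hypothetical $\bigO((|P''|+|Q''|)^{2-\delta})$ time $(3-\e)$-approximation algorithm for the discrete Fr\'echet distance would yield an $\bigO(((n+m)d)^{2-\delta}) = \bigO(n^{2-\delta} d^{\bigO(1)})$ time algorithm for nontrivial instances of \OV{} (recall $n = m$), contradicting Corollary~\ref{cor:OV}.

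I expect essentially no obstacle: the only thing that could go wrong is forgetting to verify that $c$ is constant in the construction, but this is immediate by inspection of the gadget definitions in Sections~\ref{sec:partial} and~\ref{sec:frechet}. The remark following Lemma~\ref{lem:critical} already records the stronger fact that all vertex coordinates are integers in $[0,11]$, so $c \leq 12$, which is exactly what is needed.
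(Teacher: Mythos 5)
Your proposal is correct and matches the paper's own approach exactly: it applies Lemma~\ref{lem:critical} to the one-dimensional curves $P'$ and $Q'$ from Section~\ref{sec:frechet}, observes (as the paper's remark does) that the construction uses only a constant number of distinct integer coordinates in $[0,11]$ so the blow-up from the lemma is $\bigO(1)$, and then inherits the YES/NO gap from Corollaries~\ref{cor:dFyes} and~\ref{cor:dFno} to conclude via Corollary~\ref{cor:OV}.
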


	\begin{corollary}
		For any polynomial restriction of~$1\leq |P|\leq|Q|$, the partial discrete Fr\'echet distance from~$P$ to~$Q$ has no~$\bigO((|P||Q|)^{1-\delta})$ time~$(3-\e)$-approximation unless \SETH' fails.\label{cor:discretePartial}
	\end{corollary}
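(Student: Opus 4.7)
My plan is to adapt the derivation of Corollary~\ref{cor:discrete} from Theorem~\ref{thm:dF}, but starting from the partial-Fr\'echet construction of Section~\ref{sec:partial} rather than the Fr\'echet construction of Section~\ref{sec:frechet}. Recall that on a nontrivial instance~$(U,V)$ of~\uOV, Section~\ref{sec:partial} produces in time~$\bigO((n+m)d)$ one-dimensional curves~$P$ and~$Q$ of sizes~$\Theta(nd)$ and~$\Theta((n+m)d)$ satisfying~$\doF(P,Q)\leq 1$ on \YES-instances (Corollary~\ref{cor:doFyes}) and~$\doF(P,Q)\geq 3$ on \NO-instances (Corollary~\ref{cor:doFno}); moreover all vertex coordinates of~$P$ and~$Q$ lie in the constant set~$\{0,1,\dots,11\}$, so the total number of distinct coordinates is~$c=\bigO(1)$.

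The key step is to prove a partial analogue of Lemma~\ref{lem:critical}: for continuous one-dimensional curves~$P$ and~$Q$ with~$c$ distinct vertex coordinates, there exist curves~$P'$ and~$Q'$ of sizes~$\bigO(c^2|P|)$ and~$\bigO(c^2|Q|)$ whose partial \emph{discrete} Fr\'echet distance from~$P'$ to~$Q'$ equals~$\doF(P,Q)$. The construction is exactly the subdivision used in the proof of Lemma~\ref{lem:critical}: for each of the~$\bigO(c^2)$ critical values~$\e$, subdivide every edge of~$P$ and~$Q$ at the~$\bigO(c)$ coordinates lying at distance~$\e$ from a vertex. The direction (partial discrete distance)~$\geq\doF(P,Q)$ is immediate, since any partial discrete matching induces a partial continuous matching of the same width. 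For the reverse inequality, I would argue that the infimum defining~$\doF(P,Q)$ is attained by a subcurve~$Q[a,b]$ in which each of~$Q(a)$ and~$Q(b)$ is either already a vertex of~$Q$ or lies at a coordinate at distance~$\doF(P,Q)$ from~$P(1)$ or~$P(|P|)$ respectively; both kinds of positions become vertices of~$Q'$ after the subdivision, so combining this with the interior snapping of Lemma~\ref{lem:critical} converts the optimal continuous partial matching into a partial discrete matching of equal width in~$P'$ and~$Q'$.

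Applying this partial analogue to the curves of Section~\ref{sec:partial}, which have~$c=\bigO(1)$, yields curves~$P'$ and~$Q'$ of sizes~$\bigO(|P|)$ and~$\bigO(|Q|)$ whose partial discrete Fr\'echet distance is at most~$1$ on \YES-instances and at least~$3$ on \NO-instances; the transformation preserves any polynomial restriction~$|P|\leq|Q|$. An~$\bigO((|P'||Q'|)^{1-\delta})$ time~$(3-\e)$-approximation of the partial discrete Fr\'echet distance would then decide nontrivial~\uOV{} in time~$\bigO((nm)^{1-\delta}d^{\bigO(1)})$, contradicting~\SETH' via Corollary~\ref{cor:uOV}. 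The step I expect to be the main obstacle is the endpoint-snapping claim above: verifying that the infimum in the definition of~$\doF(P,Q)$ is attained by endpoints of the stated form requires a short but careful case analysis of where the boundary of the~$\doF(P,Q)$-free space meets the columns~$\{1\}\times[1,|Q|]$ and~$\{|P|\}\times[1,|Q|]$.
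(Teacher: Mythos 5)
Your proposal is correct and follows the same route as the paper: reuse the curves of Section~\ref{sec:partial}, observe they have $\bigO(1)$ distinct coordinates, and apply the subdivision idea of Lemma~\ref{lem:critical} to equate the partial continuous and partial discrete Fr\'echet distances. The paper itself does not spell out the endpoint-snapping step you flag as the delicate point, but your reasoning for it (slide the start and end of the matched subcurve within the free-space column until they hit a vertex of~$Q$ or a coordinate at distance~$\doF(P,Q)$ from~$P(1)$ or~$P(|P|)$, both of which are subdivision points of~$Q'$) is sound and supplies exactly what the paper leaves implicit.
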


\section{Weak Fr\'echet distance}\label{sec:weak}
	In this section we consider the weak Fr\'echet distance.
	The width of a path~$\Phi\subseteq[1,|P|]\times[1,|Q|]$ is~$\max_{(i,j)\in\Phi}\|P(i)-Q(j)\|$.
	A (continuous) \emph{weak Fr\'echet matching} between~$P$ and~$Q$ is a path~$\Phi\subseteq[1,|P|]\times[1,|Q|]$ that starts at~$(1,1)$ and ends at~$(|P|,|Q|)$.
	The (continuous) \emph{weak Fr\'echet distance}~$\dwF(P,Q)$ between~$P$ and~$Q$ is the minimum width over all such matchings.
	In related work~\cite{buchin2007difficult}, a variant of the weak Fr\'echet distance which we will refer to as the \emph{weak Fr\'echet distance without endpoint restrictions}~$\dwwF(P,Q)$ was considered.
	This distance is defined analogously, except that we require~$\{i\mid (i,j)\in\Phi\}=[1,|P|]$ and~$\{j\mid (i,j)\in\Phi\}=[1,|Q|]$, and not that the path~$\Phi$ starts at~$(1,1)$ and ends at~$(|P|,|Q|)$.

	We define the discrete weak Fr\'echet distance analogously, but for discrete matchings.
	Consider the graph with vertices~$\{1,\dots,|P|\}\times\{1,\dots,|Q|\}$ and edges between pairs of vertices at~$\ell^\infty$ distance~$1$, such that vertex~$(i,j)$ has (undirected) edges to~$(i,j+1)$,~$(i+1,j-1)$,~$(i+1,j)$, and~$(i+1,j+1)$.
	A \emph{discrete weak Fr\'echet matching without endpoint restrictions}~$\Phi$ between~$P$ and~$Q$ consists of the set of vertices of a path in this graph, with the requirement that~$\{i\mid (i,j)\in\Phi\}=\{1,\dots,|P|\}$ and~$\{j\mid (i,j)\in\Phi\}=\{1,\dots,|Q|\}$.
	For a \emph{discrete weak Fr\'echet matching}, this path starts at~$(1,1)$ and ends at~$(|P|,|Q|)$.

\subsection{Discrete or higher-dimensional weak Fr\'echet distance}\label{sec:reductionWeak}
	Our lower bound constructions for the weak Fr\'echet distance are similar to the one by Bringmann~\cite{Bringmann14}.
	For a nontrivial instance~$(U,V)$ of~\uOV{}, we construct the following discrete curves~$P$ and~$Q$ in one dimension:
	\begin{align*}
		P_{\u\in U}       &= \bigcirc_{i=1}^d \C{6i+2-2u_i}\\
		Q_{\v\in V}       &= \bigcirc_{i=1}^d \C{6i+1+2v_i}\\[1ex]
		P_\mathit{skip}   &= \C{3}\circ P_{\vec{0}}\circ\C{6d+9}\\
		P                 &= \C{0}\circ P_\mathit{skip}\circ\rev{P_{\vec{1}}}\circ\bigcirc_{i=0}^{n-1}(P_{\u_i}\circ\rev{P_{\vec{1}}})\circ P_\mathit{skip}\circ\C{6d+12}\\
		Q                 &= \C{0,3}\circ\\
		&\hphantom{=~~}      Q_{\vec{1}}\circ\bigcirc_{j=0}^{m-1}(\C{6d+9}\circ\rev{Q_{\vec{0}}}\circ Q_{\v_j}\circ\rev{Q_{\vec{0}}}\circ\C{3}\circ Q_{\vec{1}})~\circ\\
		&\hphantom{=~~}      \C{6d+9,6d+12}\text{.}
	\end{align*}
	\noindent
	Alternatively, we construct the following continuous curves~$P$ and~$Q$ in two dimensions:
	\begin{align*}
		P_{\u\in U}       &= \bigcirc_{i=1}^d \C{(6i,1),(6i,2u_i),(6i+6,2u_i),(6i+6,1)}
		                  &  \hspace{-\linewidth}\includegraphics{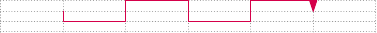}\\
		Q_{\v\in V}       &= \bigcirc_{i=1}^d \C{(6i,0),(6i,1-2v_i),(6i+6,1-2v_i),(6i+6,0)}
		                  &  \hspace{-\linewidth}\includegraphics{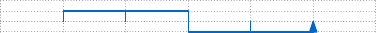}\\[1ex]
		P_\mathit{skip}   &= \C{(3,1)}\circ P_{\vec{0}}\circ\C{(6d+9,1)}\\
		P                 &= \C{(0,1)}\circ P_\mathit{skip}\circ\rev{P_{\vec{1}}}\circ\bigcirc_{i=0}^{n-1}(P_{\u_i}\circ\rev{P_{\vec{1}}})\circ P_\mathit{skip}\circ\C{(6d+12,1)}\\
		Q                 &= \C{(0,0),(3,0)}\circ\\
		&\hphantom{=~~}      Q_{\vec{1}}\circ\bigcirc_{j=0}^{m-1}(\C{(6d+9,0)}\circ\rev{Q_{\vec{0}}}\circ Q_{\v_j}\circ\rev{Q_{\vec{0}}}\circ\C{(3,0)}\circ Q_{\vec{1}})~\circ\\
		&\hphantom{=~~}      \C{(6d+9,0),(6d+12,0)}\text{.}
	\end{align*}
	\noindent
	In both cases, the curves~$P_\u$ and~$Q_\v$ have distance~$1$ if~$\u$ and~$\v$ are orthogonal and distance~$3$ otherwise.
	For a \YES-instance with orthogonal vectors~$\u_i$ and~$\v_j$, match the first copy of~$P_\mathit{skip}$ with the first~$1+2j$ gadgets of type~$Q_\v$.
	Similarly, match the last copy of~$P_\mathit{skip}$ to the last~$2(m-j)-1$ gadgets of type~$Q_\v$.
	Match the copy of~$\rev{Q_{\vec{0}}}$ preceding~$Q_{\v_j}$ with~$P$ up until the gadget~$P_{\u_i}$, match~$P_{\u_i}$ with~$Q_{\v_j}$ and match the copy of~$\rev{Q_{\vec{0}}}$ after~$Q_{\v_j}$ starting after the gadget~$P_{\u_i}$ of~$P$. 
	This yields a matching of width~$1$.
	See Figure~\ref{fig:weak} (Left).

	Conversely, a matching of width less than~$3$ must traverse one of the curves~$P_{\u_i}$ and~$Q_{\v_j}$ simultaneously, which is not possible for a \NO-instance.
	In the construction, any matching of width less than~$3$ can be extended into one containing~$(1,1)$ and~$(|P|,|Q|)$.
	Hence, the reductions also apply to the weak Fr\'echet distance without endpoint restrictions.
	\begin{figure}\centering%
		\includegraphics[width=\textwidth]{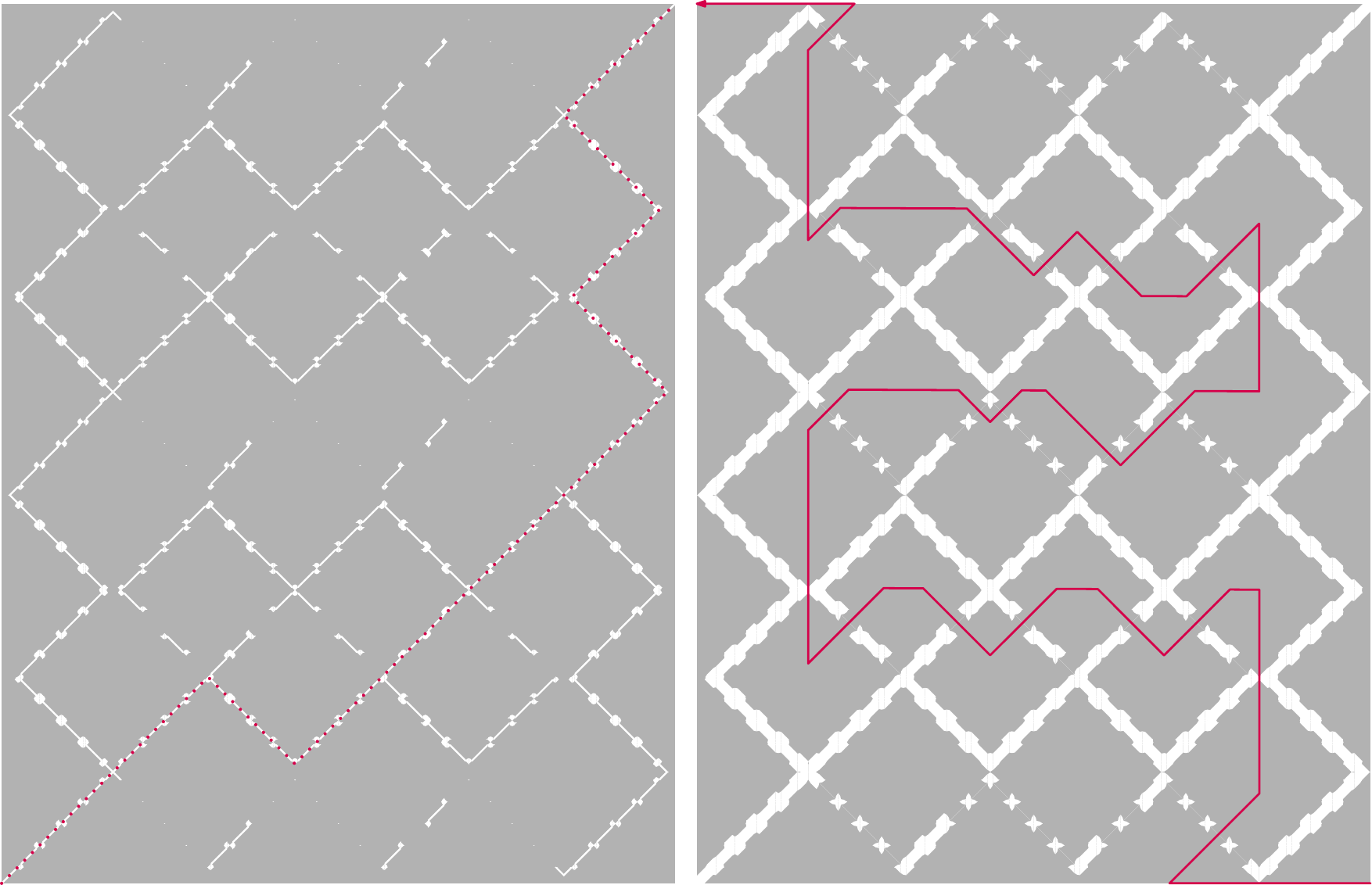}%
		\caption{Left: the~$1$-freespace of a \YES-instance with a matching (dotted). Right: the~$3$-freespace of a \NO-instance with a cut.\label{fig:weak}}%
	\end{figure}
	\begin{theorem}
		For any polynomial restriction of~$1\leq |P|\leq|Q|$, the discrete weak Fr\'echet distance between one-dimensional curves~$P$ and~$Q$ (with and without endpoint restrictions) and the weak Fr\'echet distance (with and without endpoint restrictions) between two dimensional curves~$P$ and~$Q$, has no~$\bigO((|P||Q|)^{1-\delta})$ time~$(3-\e)$-approximation unless \SETH' fails.\label{thm:partialDiscrete}
	\end{theorem}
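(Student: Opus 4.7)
The plan is to verify, for both the one-dimensional discrete and the two-dimensional continuous constructions above, the dichotomy that a nontrivial \YES-instance of \uOV{} yields (weak) Fr\'echet distance at most~$1$, while a nontrivial \NO-instance yields distance at least~$3$. Combined with Corollary~\ref{cor:uOV}, and using that both constructions produce curves of size $\Theta((n+m)d)$ in $\bigO((n+m)d)$ time, this rules out the claimed $(3-\e)$-approximation regimes.

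The core lemma I would establish first is a gadget dichotomy: $\dwF(P_\u, Q_\v) \leq 1$ when $\u \cdot \v = 0$, and $\dwF(P_\u, Q_\v) \geq 3$ otherwise. The upper bound follows from the synchronous matching, which at coordinate $i$ has width $|2u_i - (1 - 2v_i)| \leq 1$ whenever $u_i v_i = 0$. The lower bound is the key structural fact: if $u_{i^*} = v_{i^*} = 1$ for some $i^*$, then the point of $P_\u$ above $x = 6i^*$ (at height $y = 2$ in 2D, or at position $6i^*$ in 1D) is at distance at least~$3$ from every point (respectively vertex) of $Q_\v$, since the closest candidate lies at $y = -1$ directly below (respectively at position $6i^* + 3$). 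Any weak Fr\'echet matching must cover this point, so its width is at least~$3$.

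For a \YES-instance with orthogonal $\u_{i^*}, \v_{j^*}$, I would build the matching around the synchronous width-$1$ anchor between $P_{\u_{i^*}}$ and $Q_{\v_{j^*}}$, and handle the prefix and suffix of $P$ symmetrically. The key auxiliary observation is that $\vec{0}$ is orthogonal to every vector, so $\dwF(P_\alpha, Q_{\vec{0}}) \leq 1$ and $\dwF(P_{\vec{0}}, Q_\beta) \leq 1$ for every gadget $\alpha, \beta$ used. Exploiting weak non-monotonicity, the long prefix $\C{0} \circ P_\mathit{skip} \circ \rev{P_{\vec{1}}} \circ \bigcirc_{i<i^*}(P_{\u_i} \circ \rev{P_{\vec{1}}})$ of $P$ is matched with an initial piece of $Q$ terminating just before $Q_{\v_{j^*}}$: the copy of $\rev{Q_{\vec{0}}}$ immediately preceding $Q_{\v_{j^*}}$ is traversed back and forth to absorb each $P_{\u_i}$ and $\rev{P_{\vec{1}}}$ at width~$1$, while the initial $P_\mathit{skip}$ lines up with the $j^*$ complete $Q_\v$-blocks preceding the target.

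For a \NO-instance, the gadget lower bound lifts to a global cut of width~$3$ across the free space, illustrated schematically in Figure~\ref{fig:weak}. Because $P$ is forced to traverse every $P_{\u_i}$, any matching of width $<3$ must, when covering the peak of $P_{\u_i}$ at a peak-index $i'$ (where $(\u_i)_{i'} = 1$), have its $Q$-parameter inside some $Q_{\v_j}$ with $(\v_j)_{i'} = 0$; since this must hold simultaneously at every peak-index of $P_{\u_i}$, it forces $\u_i \perp \v_j$, contradicting the \NO-assumption. Formal verification then amounts to stacking empty bands in the free space around each blocked peak and checking that the connector regions ($\rev{Q_{\vec{0}}}$, $\rev{P_{\vec{1}}}$, and the skip gadgets) do not admit a width-$<3$ bypass. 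Finally, because the endpoints of $P$ and $Q$ lie within distance~$1$ by the terminal padding ($\C{0}$ versus $\C{0,3}$ at the start, and $\C{6d+12}$ versus $\C{6d+9,6d+12}$ at the end), any width-$<3$ matching extends to include $(1,1)$ and $(|P|,|Q|)$ at no additional cost, so the theorem applies also to the variant $\dwwF$ without endpoint restrictions. The main obstacle I anticipate is the bookkeeping in the \YES-prefix: scheduling the oscillations so that each traversal of $\rev{Q_{\vec{0}}}$ aligns with the intended $P$-subcurve at width exactly~$1$, and handling the boundary cases $i^* = 0$ and $i^* = n-1$ via the padding. The \NO-side is conceptually simpler but still requires care in ruling out non-monotone detours through the connectors.
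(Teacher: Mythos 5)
Your overall strategy mirrors the paper's: verify the gadget dichotomy, stitch together a width-$1$ matching around the synchronous anchor for the \YES{}-case, argue a width-$3$ barrier for the \NO{}-case, and observe that the padding at the curve endpoints handles both endpoint-restriction variants. Two specific points, however, are not quite right as written.

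First, your witness point for the gadget lower bound is off in the 2D case. You claim the point of~$P_\u$ at~$(6i^*,2)$ is at distance~$\geq 3$ from every point of~$Q_\v$, ``since the closest candidate lies at~$y=-1$ directly below.'' But~$Q_\v$ also contains the vertex~$(6i^*,0)$ (the start of the~$i^*$-th block), which is at distance~$2$ from~$(6i^*,2)$. The correct witness is the \emph{midpoint}~$(6i^*+3,2)$ of the peak segment: from there the nearest point of~$Q_\v$ is~$(6i^*+3,-1)$ at distance~$3$, and every vertical segment at~$x=6i^*$ or~$x=6i^*+6$ is at distance~$\geq\sqrt{13}>3$. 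This is a local fix, but as stated your claim is false.

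Second, your \NO{}-instance step contains a genuine gap that you partly acknowledge but do not resolve. You assert that when the matching covers a peak of~$P_{\u_i}$ at coordinate~$i'$ with~$(\u_i)_{i'}=1$, the~$Q$-parameter ``must be inside some~$Q_{\v_j}$ with~$(\v_j)_{i'}=0$.'' This is not true: the~$Q$-parameter can equally well be inside a copy of~$\rev{Q_{\vec{0}}}$, whose horizontal segments sit at~$y=1$ (distance~$1$ from the peak), and~$\vec{0}$ imposes no orthogonality constraint. Your follow-up ``since this must hold simultaneously at every peak-index \ldots it forces~$\u_i\perp\v_j$'' therefore does not follow; the~$j$ in play can change between peaks, and~$\rev{Q_{\vec{0}}}$ provides a potential escape at every single one. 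The paper's argument at this point is a free-space cut (Figure~\ref{fig:weak}, right): the connector vertices~$\C{(3,0)}$ and~$\C{(6d+9,0)}$ in~$Q$, together with the far endpoints of~$P_\mathit{skip}$ and~$\rev{P_{\vec{1}}}$ in~$P$, produce a barrier of width~$3$ separating~$(1,1)$ from~$(|P|,|Q|)$, so that every path --- monotone or not, and in particular every excursion into~$\rev{Q_{\vec{0}}}$ --- must cross it. Without an argument of that flavor (a cut, or an invariant bounding the net progress achievable via~$\rev{Q_{\vec{0}}}$ detours), the \NO{}-bound is not established; noting that ``care is required'' flags the hole but does not fill it.
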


\subsection{Continuous one-dimensional weak Fr\'echet distance}\label{sec:algoWeak}
	In this section, we show that the continuous weak Fr\'echet distance can be computed in linear time for curves in one dimension.
	For this case, the weak Fr\'echet distance without endpoint restrictions was already known to be computable in linear time, namely because it is equivalent to the Hausdorff distance between the images of those curves~\cite{buchin2007difficult}.
	
	Our algorithm for computing the continuous weak Fr\'echet distance is more complicated.
	It will be helpful to transform the input curves into canonical ones.
	Let a \emph{canonical} curve be a continuous one-dimensional curve~$P$ that contains no four consecutive vertices~$\{p_a,p_{a+1},p_{a+2},p_{a+3}\}$ with~$p_a\leq p_{a+2}\leq p_{a+1}\leq p_{a+3}$ or~$p_{a+3}\leq p_{a+1}\leq p_{a+2}\leq p_a$.
	By repeating the transformation of Lemma~\ref{lem:weakSimplify}, one can in linear time transform any continuous one-dimensional curve~$P$ into a canonical curve~$P'$ with~$\dwF(P,P')=0$.
	\begin{lemma}\label{lem:weakSimplify}
		Let~$P$ be a continuous one-dimensional curve that is not canonical due to vertices~$\{p_a,p_{a+1},p_{a+2},p_{a+3}\}$.
		Let~$P'$ be the copy of~$P$ with the edges between~$p_a$ and~$p_{a+3}$ replaced by a single edge, then~$\dwF(P,P')=0$.
	\end{lemma}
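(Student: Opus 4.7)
My plan is to exhibit a weak Fr\'echet matching of width zero between $P$ and $P'$, which by definition will give $\dwF(P,P')=0$. Because $P$ and $P'$ agree on the vertices $p_1,\dots,p_a$ and on $p_{a+3},\dots,p_{|P|}$ (the latter at indices shifted by $-2$ in $P'$), the identity-with-shift already provides a width-zero matching on the prefix and the suffix. The whole task therefore reduces to constructing a continuous path in $[a,a+3]\times[a,a+1]$ from $(a,a)$ to $(a+3,a+1)$ along which $P(i)=P'(j)$ pointwise.

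The structural key I will use is that the non-canonical ordering forces the images of the two pieces to coincide. I will handle the case $p_a\leq p_{a+2}\leq p_{a+1}\leq p_{a+3}$ explicitly; the other case is symmetric. The three edges of $P$ on $[a,a+3]$ cover $[p_a,p_{a+1}]\cup[p_{a+2},p_{a+1}]\cup[p_{a+2},p_{a+3}]$, which merges into $[p_a,p_{a+3}]$ precisely because $p_{a+2}\leq p_{a+1}$; the single edge of $P'$ on $[a,a+1]$ has the same image $[p_a,p_{a+3}]$. So, assuming $p_a<p_{a+3}$, every value $x$ attained by $P$ in this range is attained by $P'$ at a unique parameter $j(x):=a+(x-p_a)/(p_{a+3}-p_a)$.

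With this in hand, I will let $i$ traverse $[a,a+3]$ monotonically and set $\phi(i):=(i,j(P(i)))$. Over $[a,a+1]$ the second coordinate will rise linearly from $a$ to $j(p_{a+1})$; over $[a+1,a+2]$ it will descend to $j(p_{a+2})$; over $[a+2,a+3]$ it will rise to $a+1$. Continuity at the two junctions is automatic, and $P(i)=P'(j(P(i)))$ by construction, so the graph of $\phi$ is a continuous zero-width path with the required endpoints, and concatenation with the two identity segments yields the full matching. The only remaining case $p_a=p_{a+3}$ collapses all four values, making $P$ and $P'$ constant on the relevant intervals, so any continuous path in the product trivially works. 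I do not expect a serious obstacle: the crux is that the non-canonical pattern is exactly what equates the zig-zag's image with the chord's image, and the relaxation to non-monotone matchings is what permits the middle phase to reverse direction in~$j$.
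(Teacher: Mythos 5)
Your proof is correct and follows essentially the same approach as the paper's: both construct a piecewise-linear width-zero path through $[a,a+3]\times[a,a+1]$ whose interior breakpoints are $(a+1,j(p_{a+1}))$ and $(a+2,j(p_{a+2}))$ (the paper calls these $a_1,a_2$), and both splice this with the identity-with-shift on the prefix and suffix. Your explicit formula for $j$ simply makes the paper's choice of $a_1,a_2$ concrete; the constructions coincide.
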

	\begin{proof}
		Assume that~$p_{a+1}\neq p_{a+2}$ (otherwise we are done).
		Then~$p'_i=p_i$ for~$i\leq a$ and~$p'_i=p_{i+2}$ for~$i\geq a+1$.
		Pick~$a_1$ and~$a_2$ such that~$a\leq a_2<a_1\leq a+1$ and~$P'(a_2)=p_{a+2}$ and~$P'(a_1)=p_{a+1}$.
		Then the piecewise linear path with vertex sequence~$(1,1)$,~$(a,a)$,~$(a+1,a_1)$,~$(a+2,a_2)$,~$(a+3,a+1)$, and~$(|P|,|P'|)$ is a weak Fr\'echet matching of width~$0$.
	\end{proof}

	\noindent
	Canonical curves have the following structural properties.
	\begin{lemma}
		For a canonical curve~$P$ and any~$i<i'$, any shortest edge of~$P[i,i']$ has~$p_i$ or~$p_{i'}$ as an endpoint.
	\end{lemma}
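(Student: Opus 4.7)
The natural approach is a proof by contradiction: suppose some shortest edge $p_j p_{j+1}$ of $P[i,i']$ is ``strictly internal,'' meaning $i < j$ and $j+1 < i'$, so that both $p_{j-1}$ and $p_{j+2}$ also lie within $P[i,i']$. The goal is then to show that the four consecutive vertices $p_{j-1}, p_j, p_{j+1}, p_{j+2}$ satisfy one of the two forbidden canonical patterns, contradicting that $P$ is canonical.

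By symmetry I would assume $p_j < p_{j+1}$ and set $L = p_{j+1} - p_j > 0$ (the case $p_j > p_{j+1}$ is handled identically, yielding the other forbidden pattern). First I would invoke the fact that consecutive edges of a canonical curve alternate in direction---no vertex lies strictly between its neighbors---to obtain $p_{j-1} > p_j$ and $p_{j+2} < p_{j+1}$. Then, because $p_j p_{j+1}$ has minimum length $L$ among the edges of $P[i,i']$, the two neighboring edges satisfy $p_{j-1} - p_j \geq L$ and $p_{j+1} - p_{j+2} \geq L$, which rearrange to $p_{j-1} \geq p_{j+1}$ and $p_{j+2} \leq p_j$.

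Combining these with $p_j < p_{j+1}$ gives the chain $p_{j+2} \leq p_j \leq p_{j+1} \leq p_{j-1}$, which with $a := j-1$ is exactly the second forbidden pattern $p_{a+3} \leq p_{a+1} \leq p_{a+2} \leq p_a$. This is the desired contradiction. Note that when $i$ or $i'$ is not an integer, the first and last edges of $P[i,i']$ are partial segments that already have $P(i)$ or $P(i')$ as an endpoint, so the claim is trivial for them; only the fully interior edges $p_j p_{j+1}$ require the above argument.

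The main obstacle I anticipate is simply making explicit the alternating-direction property for canonical curves, which is implicit in the paper's convention (analogous to the earlier remark that one ``exhaustively removes vertices that lie on the segment between their neighbors''); once this is in place, the rest of the argument is a short direct calculation driven by the minimality of $L$.
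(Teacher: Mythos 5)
Your proof is correct and takes essentially the same approach as the paper's own one-sentence proof, which simply observes that a strictly interior shortest edge is surrounded by edges at least as long, yielding a witness that $P$ is not canonical; you have spelled out the chain of inequalities that makes this explicit and have also correctly identified the implicit reliance on the alternating-direction convention for edges.
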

	\begin{proof}
		Otherwise the shortest edge of~$P[i,i']$ is surrounded by edges that are at least as long and hence form a witness that~$P$ is not canonical.
	\end{proof}

	\begin{corollary}
		For any canonical curve, the subsequence of local maxima is quasiconcave, the subsequence of local minima is quasiconvex, and the vertices that are global minima or maxima (of which there are at most three) are all consecutive.
	\end{corollary}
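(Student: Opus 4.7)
The plan is to prove each of the three assertions by contradiction, in every case producing a four-tuple that matches one of the two explicitly forbidden canonical patterns. The structural set-up I rely on is that, under the standing convention that vertices on monotone runs are also contracted (so that edges alternate in direction, as used implicitly in the preceding Lemma), the vertex sequence of a canonical curve literally alternates between local maxima and local minima, and between any two consecutive local maxima sits exactly one intervening vertex, which is the separating local minimum.

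For quasiconcavity of the local maxima, unimodality of a sequence is characterised by the absence of a three-term valley, so it suffices to exclude three consecutive local maxima $M_{j-1}, M_j, M_{j+1}$ with $M_j < \min(M_{j-1}, M_{j+1})$. Let $L_{j-1}, L_j$ denote the two intervening local minima; then $M_{j-1}, L_{j-1}, M_j, L_j, M_{j+1}$ are five consecutive vertices. I split on the order of $L_{j-1}$ and $L_j$: if $L_{j-1} \leq L_j$ then the 4-tuple $(L_{j-1}, M_j, L_j, M_{j+1})$ satisfies $L_{j-1} \leq L_j \leq M_j \leq M_{j+1}$, the first forbidden pattern; otherwise the 4-tuple $(M_{j-1}, L_{j-1}, M_j, L_j)$ satisfies $L_j \leq L_{j-1} \leq M_j \leq M_{j-1}$, the second. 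Quasiconvexity of the local minima follows from the symmetric argument applied to $-P$, which is again canonical and swaps the two forbidden patterns.

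For the last assertion, I first rule out a global maximum $M$ and a global minimum $L$ being at three or more positions apart: in the distance-three case, the four consecutive vertices $M, L', M', L$ yield $L \leq L' \leq M' \leq M$ (the second forbidden pattern), using that $M$ is the global maximum and $L$ the global minimum. Hence all global-extremum vertices sit in a window of at most three consecutive positions. To cap the total count at three, observe that if the global maximum is attained at two consecutive local maxima $M_j = M_{j+1}$, then each of the two flanking 4-tuples $(L_{j-1}, M_j, L_j, M_{j+1})$ and $(M_j, L_j, M_{j+1}, L_{j+1})$ matches a forbidden pattern unless $L_j < L_{j-1}$ and $L_j < L_{j+1}$, forcing $L_j$ to be a strict valley in the minima subsequence; by the already-proved quasiconvexity, $L_j$ must then be the unique global minimum. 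The symmetric case of a twice-attained global minimum is analogous, yielding at most three consecutive global-extremum vertices.

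The main technical obstacle is the initial structural set-up, namely justifying that canonicality together with the contraction of monotone-run redundancies really does produce a strictly alternating vertex sequence; once that is in place, the three claims reduce to a short case analysis that reads the forbidden patterns directly off the defined canonical condition.
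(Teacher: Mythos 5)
Your proof takes a genuinely different route from the paper. The paper derives this corollary from the preceding lemma, which says that in every window the shortest edge is a boundary edge — equivalently, the edge-length sequence is strictly unimodal (with at most one flat step at the peak). Since for alternating curves the differences $M_{j+1}-M_j$ of consecutive local maxima equal $e_{j+1}-e_j$ (and $L_{j+1}-L_j = e_j-e_{j+1}$ for minima), unimodality of the edge lengths immediately gives quasiconcavity/quasiconvexity and pins the global extrema to the vertices of the single longest edge. Your proof instead reads the two forbidden four-tuples directly; it is self-contained and does not use the lemma. Your identification of the alternation assumption as a prerequisite is correct and important — the paper relies on it implicitly (a strictly monotone curve such as $\C{1,2,3,4,5}$ satisfies the literal canonical condition, yet both the shortest-edge lemma and this corollary fail for it).

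There is a genuine gap in the argument for the final assertion. You prove only that a global maximum and a global minimum cannot be at distance \emph{exactly} 3 and then assert "Hence all global-extremum vertices sit in a window of at most three consecutive positions." This does not follow: distances $5,7,\dots$ are not covered, and the distance-3 four-tuple is not present for those cases. The gap is fillable — e.g.\ with $p_a=M$, $p_{a+k}=L$, $k\geq5$, either $p_{a+1}\geq p_{a+3}$ (giving the second forbidden pattern on $p_a,\dots,p_{a+3}$), or $p_{a+1}<p_{a+3}$ and $p_{a+2}<p_{a+4}$ (giving the first pattern on $p_{a+1},\dots,p_{a+4}$), or $p_{a+2}\geq p_{a+4}$ combined with a later global-min vertex (recursing or applying the second pattern at the other end). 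Alternatively, the already-established quasiconvexity gives $p_{a+1}\geq p_{a+3}$ directly, since all minima before the trough at $L$ strictly decrease. Either way the step must be made explicit. A smaller imprecision: "unimodality is characterised by the absence of a three-term valley" is false for \emph{strict} valleys (the sequence $2,1,1,2$ has no strict three-term valley yet is not unimodal); your case analysis in fact excludes \emph{weak} valleys $M_j\leq\min(M_{j-1},M_{j+1})$, which is what one needs, so the argument is saved, but the stated characterisation should be corrected.
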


	\noindent
	A \emph{growing} curve is any canonical curve whose last edge contains both a global maximum and minimum.
	For a point~$p$ and a curve~$Q$, let~$d(p,Q)$ be the distance from~$p$ to the closest point on~$Q$.
	Our algorithm for the weak Fr\'echet distance uses the following linear-time subroutine.
	\newcommand{\GreedyMatching}{\textbf{GreedyMatching}}
	\begin{algorithm}\label{alg:greedymatching}%
	\begin{minipage}[t]{0pt}\bfseries%
		\begin{tabbing}~~\=~~~\=~~\=~~~\=~~\=~~~\=~~\=~~~\=\kill
			\GreedyMatching$(P,Q)$:\\
			$|$\>\> $r$ \>= $|p_1-q_1|$\\
			$|$\>\> $i$ \>= $1$\\
			$|$\>\> $j$ \>= $1$\\
			$|$\>\> while $i+1<|P|$ or $j+1<|Q|$:\\
			$|$\>\>$|$\>\> if $j+1<|Q|$ and $d(q_{j+1},P[i,i+1])\leq r$:\\
			$|$\>\>$|$\>\>$|$\>\> $j$ \>= $j+1$\\
			$|$\>\>$|$\>\> else if $i+1<|P|$:\\
			$|$\>\>$|$\>\>$|$\>\> $r$ \>= $\max(r,d(p_{i+1},Q[j,j+1]))$\\
			$|$\>\>$|$\>\>$|$\>\> $i$ \>= $i+1$\\
			$|$\>\>$|$\>\> else if $j+1<|Q|$:\\
			$|$\>\>$|$\>\>$|$\>\> $r$ \>= $\max(r,d(q_{j+1},P[i,i+1]))$\\
			$|$\>\>$|$\>\>$|$\>\> $j$ \>= $j+1$\\
			$|$\>\> return $r$
		\end{tabbing}%
	\end{minipage}%
	\end{algorithm}
	\begin{lemma}
		Algorithm~\ref{alg:greedymatching} computes for growing curves~$P$ and~$Q$, the minimum width over all paths~$\Phi\subseteq[1,|P|]\times[1,|Q|]$ from~$(1,1)$ to any~$(x,y)\in[|P|-1,|P|]\times[|Q|-1,|Q|]$.
	\end{lemma}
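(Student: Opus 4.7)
The plan is to prove the lemma by separately establishing that the algorithm's returned value $r$ is achievable by some path (upper bound) and that no path ending in the terminal region $[|P|-1, |P|] \times [|Q|-1, |Q|]$ has smaller width (lower bound). Linear running time is immediate, since each iteration of the main loop advances either $i$ or $j$ by one.

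For the upper bound, I would proceed by induction on iterations, maintaining the invariant that at the start of each iteration with state $(i, j, r)$ there is a path of width at most $r$ in the free-space diagram from $(1, 1)$ to some point lying on the left or bottom boundary of the cell $[i, i+1] \times [j, j+1]$. The key geometric fact is that the intersection of the $r$-free space with any single cell of the diagram is convex: within a cell both $P$ and $Q$ are linear, so $P(x) - Q(y)$ is an affine function of $(x,y)$, and $|P(x) - Q(y)| \leq r$ describes the intersection of the cell with a strip. The first-branch condition $d(q_{j+1}, P[i, i+1]) \leq r$ exactly guarantees a point in $[i, i+1] \times \{j+1\}$ that lies in the $r$-free space, and the remaining branches similarly guarantee a point in $\{i+1\} \times [j, j+1]$ after updating $r$. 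Convexity then lets me connect the previous path endpoint to this new point by a straight segment that stays inside the cell and the $r$-free space, preserving the invariant. When the loop terminates, the constructed path ends at a point in $[|P|-1, |P|] \times [|Q|-1, |Q|]$.

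For the lower bound, I would show inductively that the algorithm's current $r$ is always at most the width of any path from $(1, 1)$ to the terminal region. The growing hypothesis on $P$ and $Q$, together with the corollary that local extrema of a canonical curve form nested quasiconcave/quasiconvex sequences, restricts the $r$-free space so that the barriers encountered by the algorithm cannot be bypassed. Concretely, when the algorithm is forced to advance $i$ because $d(q_{j+1}, P[i, i+1]) > r_{\text{old}}$, the horizontal slab of the free space at height $j+1$ over column $[i, i+1]$ is blocked; the growing structure (both curves exhaust their full range on the last edge) is used to argue that no path can go around this barrier without eventually crossing a region of width at least $d(p_{i+1}, Q[j, j+1])$, which is exactly what the algorithm sets $r$ to.

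The main obstacle is precisely this last step: because weak Fr\'echet matchings are not required to be monotone, one must rule out non-monotone detours that could potentially circumvent the algorithm's barriers at a strictly smaller width. I expect this topological argument, leveraging the full strength of the canonical-and-growing hypothesis to constrain the global structure of the $r$-free space, to be the technically involved part of the proof, whereas the upper bound will essentially follow from convexity of cell-wise free space and a straightforward induction.
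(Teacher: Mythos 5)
Your plan for the upper bound matches the paper's invariant exactly: one maintains that after reaching state $(i,j,r)$ there is a path of width $r$ from $(1,1)$ to a point in the cell $[i,i+1]\times[j,j+1]$, and convexity of the $r$-free space within a single cell lets you extend the path when $i$ or $j$ advances. No issues there.

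The lower bound, however, is where your proposal has a genuine gap, and it is not merely a matter of working out technical details. Your stated invariant, ``$r$ is at most the width of any path from $(1,1)$ to the terminal region $[|P|-1,|P|]\times[|Q|-1,|Q|]$,'' is too weak to be carried through the induction. When the algorithm increases $r$ to $d(p_{i+1},Q[j,j+1])$ you must argue that no cheaper path exists; but a path that detours far up in $y$ before ever crossing $x=i+1$ only incurs a cost of $d(p_{i+1},Q)$, and for growing curves the image of $Q$ \emph{equals} the image of its last edge, which \emph{contains} the image of $Q[j,j+1]$ --- so $d(p_{i+1},Q)\leq d(p_{i+1},Q[j,j+1])$, the wrong direction for your argument. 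The paper resolves this by strengthening the invariant: it asserts that no path of width less than $r$ reaches \emph{any} point in the region $M(i,j)=[i,i+1]\times[j,j+1]\cup(i+1,|P|]\times[1,|Q|]\cup[1,|P|]\times(j+1,|Q|]$. This stronger invariant confines any competing path to the rectangle $[1,i+1]\times[1,j+1]$ until it crosses the line $x=i+1$ (at some $y\leq j+1$) or the line $y=j+1$ (at some $x\leq i+1$). At that crossing the distance is at least $d(p_{i+1},Q[1,j+1])$, respectively $d(q_{j+1},P[1,i+1])$, and \emph{here} the growing property is used in the useful direction: because the ranges of successive edges are nested and increasing, the image of $Q[1,j+1]$ is contained in the image of $Q[j,j+1]$, giving $d(p_{i+1},Q[1,j+1])\geq d(p_{i+1},Q[j,j+1])$, and symmetrically for $P$. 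You gesture at this fact (``both curves exhaust their full range on the last edge'') but you apply it to the wrong piece of the curve; the point is about the \emph{prefix} $Q[1,j+1]$ versus the \emph{current edge} $Q[j,j+1]$, not about the last edge.

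Two further remarks. First, your concern about non-monotone detours is legitimate but is fully absorbed by the $M(i,j)$ invariant: once one knows that no low-width path escapes $[1,i+1]\times[1,j+1]$, a single crossing argument suffices, and no further topological machinery is needed --- the paper's proof of invariant maintenance is about five lines. Second, your worry that this would be ``the technically involved part'' stems precisely from working with the weaker invariant; strengthening it is the key idea, not extra technique. Without the region $M(i,j)$ and the prefix-containment consequence of the growing property, your induction does not close.
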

	\begin{proof}
		Let~$M(i,j)=[i,i+1]\times[j,j+1]\cup(i+1,|P|]\times[1,|Q|]\cup[1,|P|]\times(j+1,|Q|]$.
		We use as invariant that~(1) there is a path~$\Phi\subseteq[1,|P|]\times[1,|Q|]$ of width~$r$ from~$(1,1)$ to some~$(x,y)\in[i,i+1]\times[j,j+1]$, and that~(2) there are no paths of width less than~$r$ to any~$(x,y)\in M(i,j)$.
		Indeed, this invariant holds at the start of the loop.
		When the algorithm returns we have~$i+1=|P|$ and~$j+1=|Q|$ as desired, so it remains to show that the invariant is maintained.
		Part~(1) of the invariant is maintained by construction, so it suffices to show that part~(2) is maintained.
		
		Fix some~$(i,j,r)$ and suppose that the invariant is satisfied.
		It will clearly be maintained for the next iteration if~$d(q_{j+1},P[i,i+1])\leq r$ or~$d(p_{i+1},Q[j,j+1])\leq r$, as any path to~$M(i+1,j)$ or~$M(i,j+1)$ must also enter~$M(i,j)$.
		So assume that~$r<d(q_{j+1},P[i,i+1])$ and~$r<d(p_{i+1},Q[j,j+1])$.
		Suppose for a contradiction that the invariant does not hold for the values~$(i',j',r')$ of~$(i,j,r)$ after the next iteration, then there is a path~$\Phi$ of width~$s$ with~$r\leq s<r'\leq\min(d(q_{j+1},P[i,i+1]),d(p_{i+1},Q[j,j+1]))$ from~$(1,1)$ to a point outside~$[1,i+1]\times[1,j+1]$.
		Let~$(x',y')$ be the point where~$\Phi$ leaves~$[1,i+1]\times[1,j+1]$.
		Then either~$x'=i+1$ or~$y'=j+1$.
		If~$x'=i+1$, then~$s\geq |P(x')-Q(y')|\geq d(p_{i+1},Q[1,j+1])\geq d(p_{i+1},Q[j,j+1])\geq r'$.
		Similarly, if~$y'=j+1$, then~$s\geq |P(x')-Q(y')|\geq d(q_{j+1},P[1,i+1])\geq d(q_{j+1},P[i,i+1])\geq r'$.
		As both cases give a contradiction, the invariant is maintained.
	\end{proof}

	\begin{lemma}\label{lem:weakCanonical}
		The weak Fr\'echet distance for canonical curves is computable in linear time.	
	\end{lemma}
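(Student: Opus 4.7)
The plan is to first canonicalize~$P$ and~$Q$ in linear time by repeated application of Lemma~\ref{lem:weakSimplify}, then decompose each canonical curve at its peak oscillation (the edge spanning both global extrema) into two growing pieces, and finally combine two runs of~\GreedyMatching{}.

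By the structural corollary on canonical curves, the global extrema of~$P$ occupy at most three consecutive vertices, so in linear time we can locate an index~$a_P$ such that the edge from~$p_{a_P}$ to~$p_{a_P+1}$ spans both~$\min P$ and~$\max P$; define~$a_Q$ analogously for~$Q$. Split~$P=P_1\circ P_2$ at vertex~$a_P+1$, and~$Q=Q_1\circ Q_2$ at vertex~$a_Q+1$. By construction,~$P_1$ is a growing curve. Moreover, the quasi-concavity of~$P$'s peak subsequence and the quasi-convexity of its valley subsequence force~$p_{a_P+2}$ to be the global maximum of~$P_2$, so the first edge of~$P_2$ spans~$[\min P_2,\max P_2]$ and~$\rev{P_2}$ is growing as well; the same reasoning handles~$\rev{Q_2}$.

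Let~$r_1$ and~$r_2$ be the outputs of~\GreedyMatching{}$(P_1, Q_1)$ and~\GreedyMatching{}$(\rev{P_2}, \rev{Q_2})$. My candidate is~$\dwF(P,Q) = \max(r_1,\, r_2,\, |p_{a_P+1} - q_{a_Q+1}|,\, |p_{|P|} - q_{|Q|}|)$, each term computable in linear time. The lower bound~$r_1\leq\dwF(P,Q)$ follows by restriction: any weak Fr\'echet matching for~$(P,Q)$ of width~$\dwF(P,Q)$, truncated the first time it leaves~$[1,a_P+1]\times[1,a_Q+1]$, yields a path from~$(1,1)$ into the last cell of~$(P_1,Q_1)$ of width at most~$\dwF(P,Q)$, which is an upper bound for the minimum width computed by~\GreedyMatching{}; the bound for~$r_2$ is symmetric, and the endpoint and peak-corner terms are obvious lower bounds. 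For the upper bound, I splice the two~\GreedyMatching{} matchings through the peak corner~$(a_P+1, a_Q+1)$: since~$[\min P_2,\max P_2]\subseteq[\min P,\max P]$ and the peak edge of~$P$ (the last edge of~$P_1$) sweeps~$[\min P,\max P]$ linearly, any excursion of the matching into the off-diagonal quadrant~$[a_P+1,|P|]\times[1,a_Q+1]$ can be rerouted along the peak column of~$P$ at the same~$y$-values without increasing width, and symmetrically for excursions into~$[1,a_P+1]\times[a_Q+1,|Q|]$ via the peak row of~$Q$.

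The main obstacle is making this splicing argument fully rigorous. Because weak Fr\'echet matchings are non-monotone, the optimum need not cross the~$P$-split and the~$Q$-split synchronously, so the rerouted path through the peak column of~$P$ and the peak row of~$Q$ must be patched continuously into the matchings produced by the two~\GreedyMatching{} calls. The enabling fact is the range-containment~$[\min P_2,\max P_2]\subseteq[\min P,\max P]$ (and its symmetric version for~$Q$), which makes the projection onto the peak edge well-defined and continuous. Once this is in hand, the canonicalization, the decomposition, and each~\GreedyMatching{} invocation are all linear-time, giving the claimed linear-time algorithm.
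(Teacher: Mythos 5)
Your overall plan---locate the edge spanning both global extrema of each curve, run \GreedyMatching{} on a left and a right piece, and splice the two matchings---is the same as the paper's, but the place where you split is wrong, and the resulting formula is not correct. You split~$P=P_1\circ P_2$ at the vertex~$a_P+1$, so $P_1$ and $P_2$ share only that single vertex. Consequently your left and right free spaces meet only at the single corner~$(a_P+1,a_Q+1)$, and your splice is forced through it, which is why you include the term~$|p_{a_P+1}-q_{a_Q+1}|$ in your candidate. But that quantity is \emph{not} a lower bound on~$\dwF(P,Q)$, because an optimal weak Fr\'echet matching is under no obligation to pass through~$(a_P+1,a_Q+1)$. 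Concretely, take~$P=\C{0,10,2}$ and~$Q=\C{8,10,0}$ (both canonical). Then~$a_P=1$ and~$a_Q=2$, so~$p_{a_P+1}=10$ and~$q_{a_Q+1}=0$, and your formula is at least~$|10-0|=10$. Yet the path~$(1,1)\to(2,2)\to(3,3)$ has width~$8$ (attained at the start, and~$|p_1-q_1|=8$ forces this), so~$\dwF(P,Q)=8<10$. That matching crosses~$x=a_P+1$ at~$y=2$ and crosses~$y=a_Q+1$ at~$x=3$; it never visits your corner.

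The paper avoids this by making the split \emph{overlap}: it keeps the spanning edge in both halves, taking~$P_R=\rev{P[a_P,|P|]}$ and~$Q_R=\rev{Q[a_Q,|Q|]}$ rather than your~$\rev{P_2},\rev{Q_2}$. Then the left and right GreedyMatching free spaces share the entire cell~$[a_P,a_P+1]\times[a_Q,a_Q+1]$, within which the sub-threshold free space is convex (both edges are linear), so the two matchings can be joined by a straight segment anywhere inside that cell; no corner term appears, and one gets exactly~$\dwF(P,Q)=\max(r_L,r_R)$. A second, smaller problem is your lower-bound argument: truncating an optimal matching ``the first time it leaves~$[1,a_P+1]\times[1,a_Q+1]$'' does not yield a path ending in the \emph{last cell}~$[a_P,a_P+1]\times[a_Q,a_Q+1]$, since the matching could exit through a side of the box far from that cell. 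The paper instead uses the projection~$\pi_{a_P+1}^P$ (and its~$Q$-analogue), which folds all of~$[1,|P|]$ onto~$[1,a_P+1]$ by reflecting through the spanning edge; this keeps the path continuous, does not increase its width, and is guaranteed to send~$(|P|,|Q|)$ into the last cell. With the overlapping split and the projection in place of truncation, your approach becomes the paper's proof.
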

	\begin{proof}
		Consider canonical curves~$P$ and~$Q$.
		If one curve has a single vertex, the weak Fr\'echet distance is its distance to the furthest point on the other curve.
		Consider an edge~$P[i,i+1]$ between a global minimum and maximum of~$P$.
		Define the growing curves~$P_L=P[1,i+1]$, and~$P_R=\rev{P[i,|P|]}$.
		Similarly, consider such an edge~$Q[j,j+1]$ of~$Q$ and define~$Q_L$ and~$Q_R$ analogously.
		Let~$r_L=\GreedyMatching(P_L,Q_L)$ and~$r_R=\GreedyMatching(P_R,Q_R)\}$.
		
		We show that~$\dwF(P,Q)=\max(r_L,r_R)$.
		For this, consider a weak Fr\'echet matching~$\Phi$ of width~$w$ between~$P$ and~$Q$.
		Define~$\pi_{i+1}^P\from[1,|P|]\to[i,i+1]$ as the map for which~$\pi_{i+1}(x)$ is the unique point~$x'\in[i,i+1]$ with~$P(x')=P(x)$ for~$x>i+1$, and~$\pi_{i+1}(x)=x$ for~$x\leq i+1$.
		Define a path~$\Phi'$ by replacing any point~$(x,y)\in\Phi$ by~$(\pi_{i+1}^P(x),\pi_{j+1}^Q(y))$.
		Then~$\Phi'\subseteq[1,i+1]\times[1,j+1]$ is a path of width at most~$w$ from~$(1,1)$ to a point~$(x,y)\in[i,i+1]\times[j,j+1]$.
		Thus~$r_L\leq\dwF(P,Q)$ and by symmetric argument~$r_R\leq\dwF(P,Q)$.

		It remains to show that~$\dwF(P,Q)\leq\max(r_L,r_R)$.
		There exists a path~$\Phi_L$ of width~$r_L$ from~$(1,1)$ to~$(x_L,y_L)\in[i,i+1]\times[j,j+1]$.
		There also is a path~$\Phi_R$ of width~$r_R$ from~$(x_R,y_R)\in[i,i+1]\times[j,j+1]$ to~$(|P|,|Q|)$.
		Connecting these paths with the straight segment from~$(x_L,y_L)$ to~$(x_R,y_R)$ yields a weak Fr\'echet matching of the desired width.
	\end{proof}

	\begin{theorem}\label{thm:weak}
		The weak Fr\'echet distance between continuous one-dimensional curves can be computed in linear time.
	\end{theorem}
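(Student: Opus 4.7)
The plan is to combine the two lemmas already established: Lemma~\ref{lem:weakSimplify} lets us simplify a single non-canonical quadruple without changing the weak Fr\'echet distance, and Lemma~\ref{lem:weakCanonical} handles canonical inputs in linear time. So I would proceed by first transforming each input curve $P$ and $Q$ into canonical curves $P'$ and $Q'$ with $\dwF(P,P')=\dwF(Q,Q')=0$, and then invoking Lemma~\ref{lem:weakCanonical} on $(P',Q')$. By the triangle inequality applied to the (pseudo)metric $\dwF$, we have $\dwF(P,Q)=\dwF(P',Q')$, so the returned value is correct.

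The key step is to show that each curve can be made canonical in linear total time, not merely that each individual simplification from Lemma~\ref{lem:weakSimplify} is cheap. I would process the vertex sequence of $P$ with a single left-to-right scan using a stack that stores the current candidate prefix of $P'$. When a new vertex $p_{a+3}$ is pushed, I check whether the top four entries $p_a,p_{a+1},p_{a+2},p_{a+3}$ form a forbidden pattern ($p_a\leq p_{a+2}\leq p_{a+1}\leq p_{a+3}$ or the symmetric one); if so, pop $p_{a+1}$ and $p_{a+2}$ and recheck with the new top four. Each vertex is pushed once and popped at most once, giving $\bigO(|P|)$ time; the same procedure applied to $Q$ takes $\bigO(|Q|)$ time. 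Correctness of this iterated simplification follows by induction from Lemma~\ref{lem:weakSimplify}: each individual pop preserves the weak Fr\'echet distance to the original curve (concatenating a width-$0$ matching with the identity matching on the unchanged prefix and suffix).

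Once $P'$ and $Q'$ are canonical, Lemma~\ref{lem:weakCanonical} computes $\dwF(P',Q')$ in $\bigO(|P'|+|Q'|)=\bigO(|P|+|Q|)$ time. Putting the two phases together yields an overall linear-time algorithm, establishing the theorem.

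The main obstacle I anticipate is the amortized analysis of the simplification phase: one has to argue carefully that after popping two vertices, the newly exposed quadruple at the top of the stack really is the only place where a new violation can arise, so that a pure stack-based scan suffices without restarting. This follows because the vertices strictly before position $a$ are untouched by the local edit, and local forbidden patterns are characterized by four consecutive vertices, but it is the one detail worth stating explicitly when writing out the proof in full.
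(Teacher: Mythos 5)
Your proposal matches the paper's proof of Theorem~\ref{thm:weak} exactly: simplify each curve to canonical form by repeated application of Lemma~\ref{lem:weakSimplify}, then invoke Lemma~\ref{lem:weakCanonical} and conclude via the triangle inequality. The stack-based scan you describe is a correct and slightly more explicit account of the linear-time simplification step that the paper asserts without detail.
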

	\begin{proof}
		Transform input curves~$P$ and~$Q$ into canonical curves~$P'$ and~$Q'$ in linear time.
		By triangle inequality we have~$\dwF(P,Q)=\dwF(P',Q')$, which can be computed in linear time by Lemma~\ref{lem:weakCanonical}.
	\end{proof}

\section{Discussion}
	We have shown that the Fr\'echet and many of its variants cannot be approximated better than factor~$3$ in strongly subquadratic time unless~\SETH{}' fails.
	Although we show that similar reductions cannot improve upon this factor, it remains open whether this factor is tight, or if there is a strongly subquadratic constant factor approximation at all.
	Furthermore, for curves in~$1$D, our construction for the Fr\'echet distance does not rule out a strongly subquadratic algorithm for curves with an imbalanced number of vertices.

\bibliography{refs}
\newpage
\appendix
\section{Figures accompanying Section~\ref{sec:frechet}}
	\begin{figure*}[h]\centering%
		\makebox[\linewidth]{\includegraphics{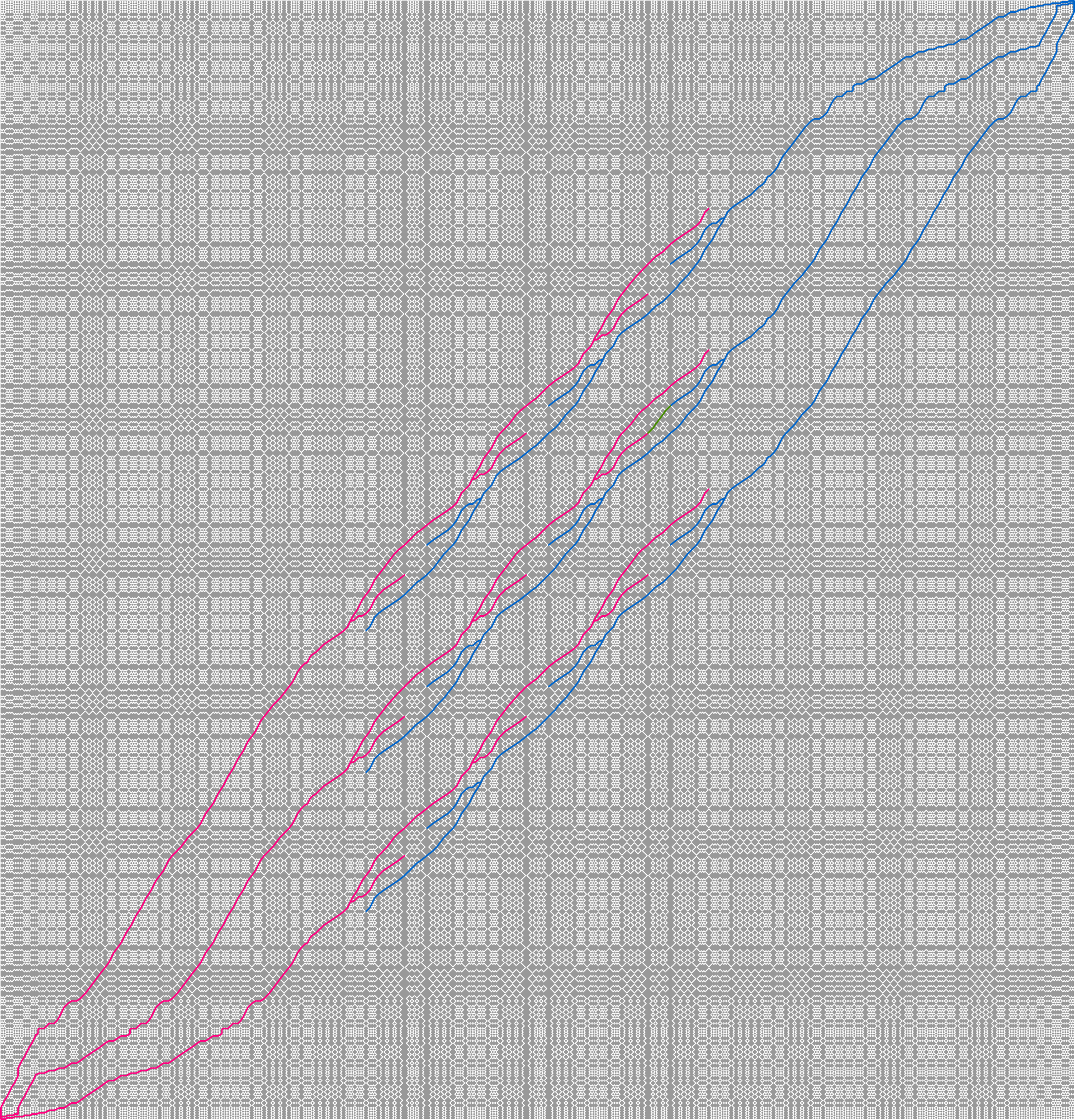}}%
		\caption{The~$1$-free space containing a matching of width~$1$ for our construction for a \YES-instance with~$U=\{(1,1,0),(0,1,1),(1,0,1)\}$ and~$V=\{(0,1,0),(1,0,1),(1,1,0)\}$.\label{fig:fdSat}}
		\vspace{-15em}
	\end{figure*}
	\begin{figure*}[h]\centering%
		\makebox[\linewidth]{\includegraphics{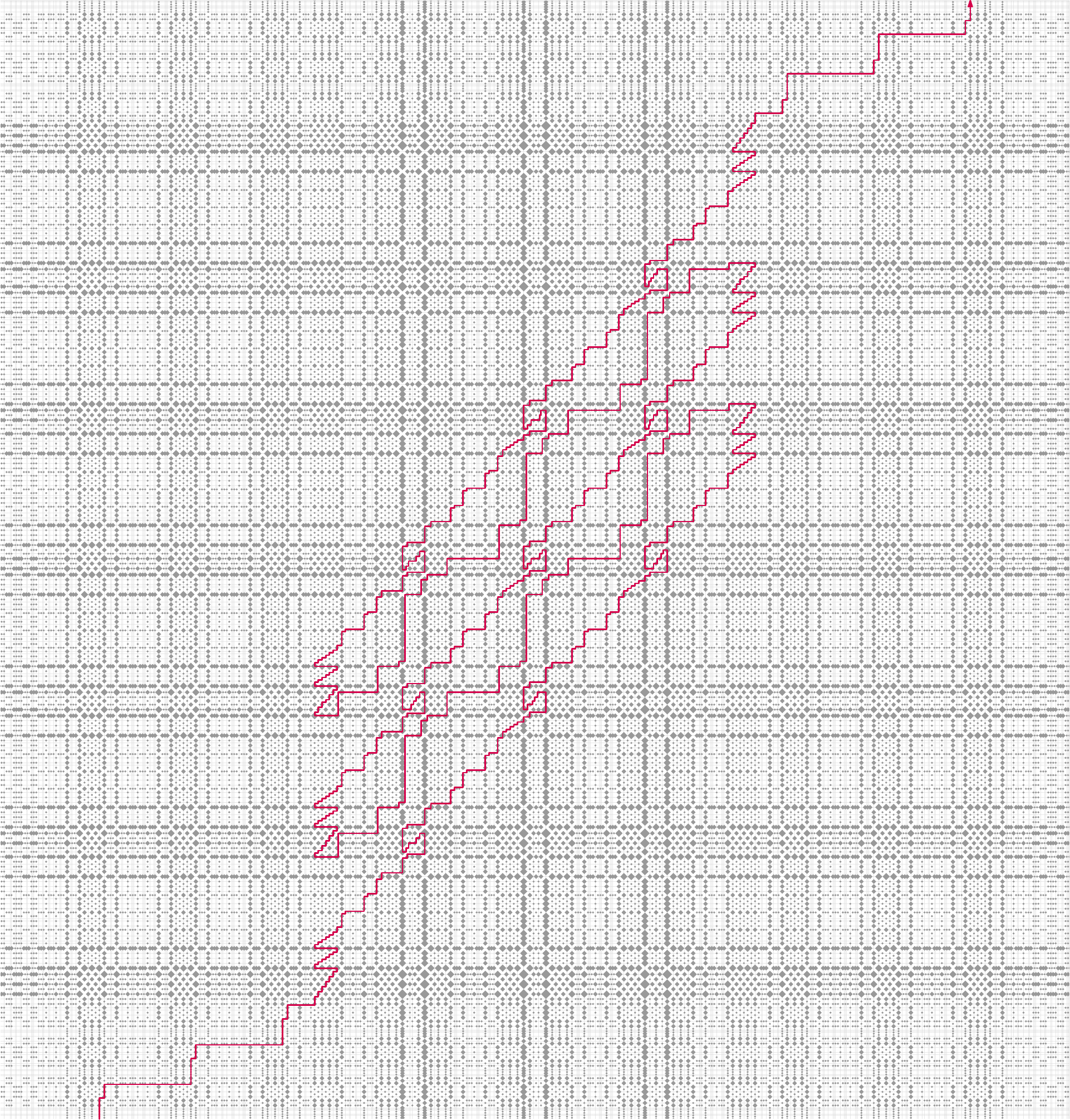}}%
		\caption{A cut of width~$3$ drawn in the~$3$-free space for our construction for a \NO-instance with~$U=\{(1,1,0),(0,1,1),(1,0,1)\}$ and~$V=\{(0,1,1),(1,0,1),(1,1,0)\}$.\label{fig:fdUnsat}}
	\end{figure*}
\end{document}